\newcommand{\dashedrightarrow}[1][2pt]{%
  \settowidth{\@tempdima}{$\rightarrow$}\rightarrow
  \makebox[-\@tempdima]{\hskip-1.5ex\color{white}\rule[0.5ex]{#1}{1pt}}
  \phantom{\rightarrow}
}
\newcommand{\integers}{\mathbb{Z}}
\newcommand{\peel}{\ensuremath{\mathsf{peel}}}
\newcommand{\ourtool}{\textsc{Diffy}}
\newcommand{\vajra}{\textsc{Vajra}}
\newcommand{\viap}{\textsc{VIAP}}
\newcommand{\veriabs}{\textsc{VeriAbs}}
\newcommand{\zthree}{\textsc{Z3}}
\newcommand{\booster}{\textsc{Booster}}
\newcommand{\vaphor}{\textsc{Vaphor}}
\newcommand{\PP}{\ensuremath{\mathsf{P}}}
\newcommand{\QQ}{\ensuremath{\mathsf{Q}}}
\newcommand{\RR}{\ensuremath{\mathsf{R}}}
\newcommand{\LL}{\ensuremath{\mathsf{L}}}
\newcommand{\EE}{\ensuremath{\mathsf{E}}}
\newcommand{\PB}{\ensuremath{\mathsf{PB}}}
\newcommand{\Stmt}{\ensuremath{\mathsf{St}}}
\newcommand{\scVar}{\ensuremath{v}}
\newcommand{\lpVar}{\ensuremath{\ell}}
\newcommand{\ArVar}{\ensuremath{A}}
\newcommand{\OP}{\ensuremath{\mathsf{op}}}
\newcommand{\BoolE}{\ensuremath{\mathsf{BoolE}}}
\newcommand{\UB}{\ensuremath{\mathsf{UB}}}
\newcommand{\iif}{\ensuremath{\mathbf{if}}}
\newcommand{\eelse}{\ensuremath{\mathbf{else}}}
\newcommand{\tthen}{\ensuremath{\mathbf{then}}}
\newcommand{\ffor}{\ensuremath{\mathbf{for}}}
\newcommand{\cconst}{\ensuremath{\mathsf{c}}}
\newcommand{\true}{\ensuremath{\mathsf{True}}}
\newcommand{\false}{\ensuremath{\mathsf{False}}}
\renewcommand{\orcidID}[1]{\href{http://orcid.org/#1}{\raisebox{-1.25pt}{\includegraphics{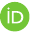}}}}
\begin{document}

\title{\textsc{Diffy}: Inductive Reasoning of Array Programs using Difference Invariants}
\author{Supratik Chakraborty\inst{1}\orcidID{0000-0002-7527-7675} \and Ashutosh Gupta\inst{1} \and Divyesh Unadkat\inst{1,2}\orcidID{0000-0001-6106-4719}}
\institute{Indian Institute of Technology Bombay, Mumbai, India\\
  \email{\{supratik,akg\}@cse.iitb.ac.in} \and
  TCS Research, Pune, India\\
  \email{divyesh.unadkat@tcs.com}}

\maketitle

\begin{abstract}
  We present a novel verification technique to prove interesting
properties of a class of array programs with a symbolic parameter $N$
denoting the size of arrays.
The technique relies on constructing two slightly different versions
of the same program.
It infers difference relations between the corresponding variables at
key control points of the joint control-flow graph of the two program
versions.
%
%
The desired post-condition is then proved by
inducting on the program parameter $N$, wherein the difference
invariants are crucially used in the inductive step.
This contrasts with classical techniques that rely on finding
potentially complex loop invaraints for each loop in the program.
Our synergistic combination of inductive reasoning and finding simple
difference invariants helps prove properties of programs that cannot
be proved even by the winner of Arrays sub-category from SV-COMP 2021.
We have implemented a prototype tool called {\ourtool} to demonstrate
these ideas.
We present results comparing the performance of {\ourtool}
with that of state-of-the-art tools.



\end{abstract}

\section{Introduction}
\label{sec:intro}
Software used in a wide range of applications use arrays to store and
update data, often using loops to read and write arrays.  Verifying
correctness properties of such array programs is important, yet
challenging.  A variety of techniques have been proposed in the
literature to address this problem, including inference of quantified
loop invariants~\cite{qli}.  However, it is often difficult to
automatically infer such invariants, especially when programs have
loops that are sequentially composed and/or nested within each other,
and have complex control flows.  This has spurred recent interest in
mathematical induction-based techniques for verifying parametric
properties of array manipulating
programs~\cite{tacas20,viap,sas17,brain}.  While induction-based
techniques are efficient and quite powerful, their Achilles heel is
the automation of the inductive argument.  Indeed, this often becomes
the limiting step in applications of induction-based techniques.
Automating the induction step and expanding the class of
array manipulating programs to which induction-based techniques can be
applied forms the primary motivation for our work.
Rather than being a stand-alone technique, we envisage our work being
used as part of a portfolio of techniques in a modern program verification
tool.

We propose a novel and practically efficient induction-based technique
that advances the state-of-the-art in automating the inductive step
when reasoning about array manipulating programs.  This allows us to
automatically verify interesting properties of a large class of array
manipulating programs that are beyond the reach of state-of-the-art
induction-based techniques, viz.~\cite{tacas20,viap}.  The work that
comes closest to us is {\vajra}~\cite{tacas20}, which is part of the
portfolio of techniques in {\veriabs}~\cite{veriabs} -- the winner of
SV-COMP 2021 in the Arrays Reach sub-category.  Our work addresses
several key limitations of the technique implemented in {\vajra},
thereby making it possible to analyze a much larger class of array
manipulating programs than can be done by {\veriabs}.  Significantly,
this includes programs with nested loops that have hitherto been
beyond the reach of automated techniques that use mathematical
induction~\cite{tacas20,viap,brain}.

A key innovation in our approach is the construction of two slightly
different versions of a given program that have identical control flow
structures but slightly different data operations.  We automatically
identify simple relations, called \emph{difference invariants},
between corresponding variables in the two versions of a program at
key control flow points.  Interestingly, these relations often turn
out to be significantly simpler than inductive invariants required to
prove the property directly. This is not entirely surprising, since
the difference invariants depend less on what individual statements in
the programs are doing, and more on the difference between what they
are doing in the two versions of the program.  We show how the two
versions of a given program can be automatically constructed, and how
differences in individual statements can be analyzed to infer simple
difference invariants.  Finally, we show how these difference
invariants can be used to simplify the reasoning in the inductive step
of our technique.

We consider programs with (possibly nested) loops manipulating arrays,
where the size of each array is a symbolic integer parameter $N~(>
0)$\footnote{For a more general class of programs supported by our
  technique, please see~\cite{diffy-arxiv21}.}.  We verify (a
sub-class of) quantified and quantifier-free properties
that may depend on the symbolic parameter $N$.  Like
in~\cite{tacas20}, we view the verification problem as one of proving
the validity of a parameterized Hoare triple $\{\varphi(N)\} \;\PP_N\;
\{\psi(N)\}$ for all values of $N~(> 0)$, where arrays are of size $N$
in the program $\PP_N$, and $N$ is a free variable in $\varphi(\cdot)$
and $\psi(\cdot)$.

To illustrate the kind of programs that are amenable to our technique,
consider the program shown in Fig.~\ref{fig:motex}(a), adapted from an
SV-COMP benchmark.  This program has a couple of sequentially composed
loops that update arrays and scalars.  The scalars {\tt S} and {\tt F}
are initialized to $0$ and $1$ respectively before the first loop
starts iterating.  Subsequently, the first loop computes a recurrence
in variable {\tt S} and initializes elements of the array {\tt B} to
$1$ if the corresponding elements of array {\tt A} have non-negative
values, and to $0$ otherwise.  The outermost branch condition in the
body of the second loop evaluates to true only if the program
parameter $N$ and the variable {\tt S} have same values.  The value of
{\tt F} is reset based on some conditions depending on corresponding
entries of arrays {\tt A} and {\tt B}.  The pre-condition of this
program is {\tt true}; the post-condition asserts that {\tt F} is
never reset in the second loop.

State-of-the-art techniques find it difficult to prove the assertion
in this program.  Specifically, {\vajra}~\cite{tacas20} is unable to
prove the property, since it cannot reason about the branch condition
(in the second loop) whose value depends on the program parameter $N$.
{\veriabs}~\cite{veriabs}, which employs a sequence of techniques such
as loop shrinking, loop pruning, and inductive reasoning
using~\cite{tacas20} is also unable to verify the assertion shown in
this program.  Indeed, the loops in this program cannot be merged as
the final value of {\tt S} computed by the first loop is required in
the second loop; hence loop shrinking does not help.  Also, loop pruning
does not work due to the complex dependencies in the program and the
fact that the exact value of the recurrence variable {\tt S} is
required to verify the program.  Subsequent abstractions and
techniques applied by {\veriabs} from its portfolio are also unable to
verify the given post-condition.  {\viap}~\cite{viap} translates the
program to a quantified first-order logic formula in the theory of
equality and uninterpreted functions~\cite{viaptheory}. It applies a
sequence of tactics to simplify and prove the generated formula.
These tactics include computing closed forms of recurrences, induction
over array indices and the like to prove the property.  However, its
sequence of tactics is unable to verify this example within our time
limit of $1$ minute.

\begin{figure}[t]
 \begin{tabular}{l|l}
  \begin{minipage}{0.49\textwidth}
  {\scriptsize
\begin{verbatim}
// assume(true)
1. S = 0; F = 1;
2. for(i = 0; i< N; i++) {
3.  S = S + 1;
4.  if ( A[i] >= 0 ) B[i] = 1;
5.  else B[i] = 0;
6. }
7. for(j = 0; j< N; j++) {
8.  if(S == N) {
9.   if ( A[j] >= 0 && !B[j] ) F = 0;
10.  if ( A[j] < 0 && B[j] ) F = 0;
11. }
12.}
// assert(F == 1)
\end{verbatim}
  }
  \begin{center}(a)\end{center}
  \end{minipage}
  &
  \begin{minipage}{0.49\textwidth}
  {\scriptsize
\begin{verbatim}
// assume(true)
1. S = 0;
2. for(i=0; i<N; i++) A[i] = 0;
3. for(j=0; j<N; j++) S = S + 1;
4. for(k=0; k<N; k++) {
5.  for(l=0; l<N; l++) A[l] = A[l] + 1;
6.  A[k] = A[k] + S;
7. }
// assert(forall x in [0,N), A[x]==2*N)
\end{verbatim}
  }
  \begin{center}(b)\end{center}
  \end{minipage}
 \end{tabular}
\caption{Motivating Examples}
\label{fig:motex}
\end{figure}

Benchmarks with nested loops are a long standing challenge for most
verifiers.  Consider the program shown in Fig.~\ref{fig:motex}(b)
with a nested loop in addition to sequentially composed loops.  The first
loop initializes entries in array {\tt A} to $0$.  The second loop
aggregates a constant value in the scalar {\tt S}.  The third loop
is a nested loop that updates array {\tt A} based on the value of
{\tt S}.  The entries of {\tt A} are updated in the inner as
well as outer loop.  The property asserts that on termination, each array
element equals twice the value of the parameter $N$.

While the inductive reasoning of {\vajra} and the tactics in {\viap}
do not support nested loops, the sequence of techniques used by
{\veriabs} is also unable to prove the given post-condition in this
program.  In sharp contrast, our prototype tool {\ourtool} is able to
verify the assertions in both these programs automatically within a
few seconds.  This illustrates the power of the inductive technique
proposed in this paper.

The technical contributions of the paper can be summarized as follows:
\begin{itemize}
  \item We present a novel technique based on mathematical induction
    to prove interesting properties of a class of programs that
    manipulate arrays. The crucial inductive step in our technique
    uses difference invariants from two slightly different versions of
    the same program, and differs significantly from other
    induction-based techniques proposed in the
    literature~\cite{tacas20,viap,sas17,brain}.
  \item We describe algorithms to transform the input program for use
    in our inductive verification technique.  We also present
    techniques to infer simple difference invariants from the two slightly
    different
    program versions, and to complete the inductive step using these
    difference invariants.
  \item We describe a prototype tool {\ourtool} that implements our
    algorithms.
  \item We compare {\ourtool} vis-a-vis state-of-the-art tools for
    verification of C programs that manipulate arrays on a large set
    of benchmarks.  We demonstrate that {\ourtool} significantly
    outperforms the winners of SV-COMP 2019, 2020 and 2021 in the
    Array Reach sub-category.
\end{itemize}



\section{Overview and Relation to Earlier Work}
\label{sec:overview}
In this section, we provide an overview of the main ideas underlying
our technique.  We also highlight how our technique differs
from~\cite{tacas20}, which comes closest to our work.  To keep the
exposition simple, we consider the program $\PP_N$, shown in the first
column of Fig.~\ref{fig:high-lvl}, where $N$ is a symbolic parameter
denoting the sizes of arrays {\tt a} and {\tt b}.  We assume that we
are given a parameterized pre-condition $\varphi(N)$, and our goal is
to establish the parameterized post-condition $\psi(N)$, for all
$N>0$.  In~\cite{tacas20,brain}, techniques based on mathematical
induction (on $N$) were proposed to solve this class of problems.  As
with any induction-based technique, these approaches consist of three
steps.  First, they check if the \emph{base case} holds, i.e. if the
Hoare triple $\{\varphi(N)\}\;\PP_N\;\{\psi(N)\}$ holds for small
values of $N$, say $1 \le N \le M$, for some $M > 0$.  Next, they
assume that the \emph{inductive hypothesis}
$\{\varphi(N-1)\}\;\PP_{N-1}\;\{\psi(N-1)\}$ holds for some $N \ge
M+1$.  Finally, in the \emph{inductive step}, they show that if the
inductive hypothesis holds, so does
$\{\varphi(N)\}\;\PP_N\;\{\psi(N)\}$.  It is not hard to see that the
inductive step is the most crucial step in this style of reasoning.
It is also often the limiting step, since not all programs and
properties allow for efficient inferencing of
$\{\varphi(N)\}\;\PP_N\;\{\psi(N)\}$ from
$\{\varphi(N-1)\}\;\PP_{N-1}\;\{\psi(N-1)\}$.

\begin{figure}[!t]

\resizebox{\textwidth}{!}{

\tikzset{every picture/.style={line width=0.75pt}} 

\begin{tikzpicture}[x=0.75pt,y=0.75pt,yscale=-1,xscale=1]

\draw  [color={rgb, 255:red, 208; green, 2; blue, 27 }  ,draw opacity=1 ][dash pattern={on 4.5pt off 4.5pt}] (13.33,57) -- (111.33,57) -- (111.33,92) -- (13.33,92) -- cycle ;
\draw  [color={rgb, 255:red, 65; green, 117; blue, 5 }  ,draw opacity=1 ][dash pattern={on 4.5pt off 4.5pt}] (10.33,168) -- (111.67,168) -- (111.67,190) -- (10.33,190) -- cycle ;
\draw    (120.58,-3) -- (120.58,342) ;
\draw  [line width=1.5]  (121.58,121.15) -- (133.23,121.15) -- (133.23,118.59) -- (141,123.7) -- (133.23,128.8) -- (133.23,126.25) -- (121.58,126.25) -- cycle ;
\draw  [color={rgb, 255:red, 208; green, 2; blue, 27 }  ,draw opacity=1 ][dash pattern={on 4.5pt off 4.5pt}] (158,57) -- (264.33,57) -- (264.33,93.5) -- (158,93.5) -- cycle ;
\draw  [color={rgb, 255:red, 65; green, 117; blue, 5 }  ,draw opacity=1 ][dash pattern={on 4.5pt off 4.5pt}] (157.67,166) -- (264.33,166) -- (264.33,187) -- (157.67,187) -- cycle ;
\draw  [color={rgb, 255:red, 208; green, 2; blue, 27 }  ,draw opacity=1 ] (147,101) -- (257,101) -- (257,136.5) -- (147,136.5) -- cycle ;
\draw  [color={rgb, 255:red, 65; green, 117; blue, 5 }  ,draw opacity=1 ] (147.67,194) -- (262,194) -- (262,215.67) -- (147.67,215.67) -- cycle ;
\draw    (282.58,-2) -- (282.67,342) ;
\draw  [color={rgb, 255:red, 208; green, 2; blue, 27 }  ,draw opacity=1 ][dash pattern={on 4.5pt off 4.5pt}] (322,57) -- (428.33,57) -- (428.33,93) -- (322,93) -- cycle ;
\draw  [color={rgb, 255:red, 65; green, 117; blue, 5 }  ,draw opacity=1 ][dash pattern={on 4.5pt off 4.5pt}] (322.33,123) -- (427.33,123) -- (427.33,144) -- (322.33,144) -- cycle ;
\draw  [color={rgb, 255:red, 208; green, 2; blue, 27 }  ,draw opacity=1 ] (311,157) -- (422.67,157) -- (422.67,189) -- (311,189) -- cycle ;
\draw  [color={rgb, 255:red, 65; green, 117; blue, 5 }  ,draw opacity=1 ] (309.67,195) -- (422,195) -- (422,217) -- (309.67,217) -- cycle ;
\draw  [dash pattern={on 4.5pt off 4.5pt}]  (291,150) -- (448.67,150) -- (454.67,150) ;
\draw   (433.67,147) .. controls (438.34,147) and (440.67,144.67) .. (440.67,140) -- (440.67,91.68) .. controls (440.67,85.01) and (443,81.68) .. (447.67,81.68) .. controls (443,81.68) and (440.67,78.35) .. (440.67,71.68)(440.67,74.68) -- (440.67,28) .. controls (440.67,23.33) and (438.34,21) .. (433.67,21) ;
\draw   (434.67,215) .. controls (439.34,215) and (441.67,212.67) .. (441.67,208) -- (441.67,195.48) .. controls (441.67,188.81) and (444,185.48) .. (448.67,185.48) .. controls (444,185.48) and (441.67,182.15) .. (441.67,175.48)(441.67,178.48) -- (441.67,164.17) .. controls (441.67,159.5) and (439.34,157.17) .. (434.67,157.17) ;
\draw    (233,111) .. controls (301.95,101.15) and (274.85,162.13) .. (237.7,176.38) ;
\draw [shift={(236,177)}, rotate = 341.11] [color={rgb, 255:red, 0; green, 0; blue, 0 }  ][line width=0.75]    (10.93,-3.29) .. controls (6.95,-1.4) and (3.31,-0.3) .. (0,0) .. controls (3.31,0.3) and (6.95,1.4) .. (10.93,3.29)   ;
\draw    (511.58,-4) -- (511.67,342) ;
\draw  [color={rgb, 255:red, 208; green, 2; blue, 27 }  ,draw opacity=1 ][dash pattern={on 4.5pt off 4.5pt}] (539,57) -- (650,57) -- (650,91) -- (539,91) -- cycle ;
\draw  [color={rgb, 255:red, 208; green, 2; blue, 27 }  ,draw opacity=1 ] (520,217) -- (632.67,217) -- (632.67,251) -- (520,251) -- cycle ;
\draw   (651.67,146) .. controls (656.34,146) and (658.67,143.67) .. (658.67,139) -- (658.67,91.73) .. controls (658.67,85.06) and (661,81.73) .. (665.67,81.73) .. controls (661,81.73) and (658.67,78.4) .. (658.67,71.73)(658.67,74.73) -- (658.67,29) .. controls (658.67,24.33) and (656.34,22) .. (651.67,22) ;
\draw   (652.67,341) .. controls (657.34,341) and (659.67,338.67) .. (659.67,334) -- (659.67,256.26) .. controls (659.67,249.59) and (662,246.26) .. (666.67,246.26) .. controls (662,246.26) and (659.67,242.93) .. (659.67,236.26)(659.67,239.26) -- (659.67,166) .. controls (659.67,161.33) and (657.34,159) .. (652.67,159) ;
\draw  [dash pattern={on 4.5pt off 4.5pt}]  (520,151) -- (655.67,151) -- (661.67,151) ;
\draw  [line width=1.5]  (283.58,124.15) -- (295.23,124.15) -- (295.23,121.59) -- (303,126.7) -- (295.23,131.8) -- (295.23,129.25) -- (283.58,129.25) -- cycle ;
\draw  [color={rgb, 255:red, 65; green, 117; blue, 5 }  ,draw opacity=1 ][dash pattern={on 4.5pt off 4.5pt}] (540.33,122) -- (641.67,122) -- (641.67,144) -- (540.33,144) -- cycle ;
\draw  [color={rgb, 255:red, 65; green, 117; blue, 5 }  ,draw opacity=1 ] (519.67,315) -- (632,315) -- (632,337) -- (519.67,337) -- cycle ;

\draw (3,38) node [anchor=north west][inner sep=0.75pt]   [align=left] {for(i=0; i$<$N; i++)};
\draw (2,145) node [anchor=north west][inner sep=0.75pt]   [align=left] {for(j=0; j$<$N; j++)};
\draw (15,58) node [anchor=north west][inner sep=0.75pt]   [align=left] {x = x + N*N;};
\draw (15,73) node [anchor=north west][inner sep=0.75pt]   [align=left] {a[i] = a[i] + N;};
\draw (42,221.4) node [anchor=north west][inner sep=0.75pt]    {$\PP_{N}$};
\draw (12.33,171) node [anchor=north west][inner sep=0.75pt]   [align=left] {b[j] = x + j;};
\draw (144,37) node [anchor=north west][inner sep=0.75pt]   [align=left] {for(i=0; i$<$N-1; i++)};
\draw (145,144) node [anchor=north west][inner sep=0.75pt]   [align=left] {for(j=0; j$<$N-1; j++)};
\draw (160,58) node [anchor=north west][inner sep=0.75pt]   [align=left] {x = x + N*N;};
\draw (160,74) node [anchor=north west][inner sep=0.75pt]   [align=left] {a[i] = a[i] + N ;};
\draw (159.67,169) node [anchor=north west][inner sep=0.75pt]   [align=left] {b[j] = x + j;};
\draw (149,102) node [anchor=north west][inner sep=0.75pt]   [align=left] {x = x + N*N;};
\draw (148,118) node [anchor=north west][inner sep=0.75pt]   [align=left] {a[N-1] = a[N-1]+N;};
\draw (149.67,197) node [anchor=north west][inner sep=0.75pt]   [align=left] {b[N-1] = x + N-1;};
\draw (309,37) node [anchor=north west][inner sep=0.75pt]   [align=left] {for(i=0; i$<$N-1; i++)};
\draw (309,101) node [anchor=north west][inner sep=0.75pt]   [align=left] {for(j=0; j$<$N-1; j++)};
\draw (324,58) node [anchor=north west][inner sep=0.75pt]   [align=left] {x = x + N*N;};
\draw (324,74) node [anchor=north west][inner sep=0.75pt]   [align=left] {a[i] = a[i] + N ;};
\draw (324.33,124) node [anchor=north west][inner sep=0.75pt]   [align=left] {b[j] = x+N*N+ j;};
\draw (313,157) node [anchor=north west][inner sep=0.75pt]   [align=left] {x = x + N*N ;};
\draw (312,171) node [anchor=north west][inner sep=0.75pt]   [align=left] {a[N-1] = a[N-1]+N;};
\draw (311.67,198) node [anchor=north west][inner sep=0.75pt]   [align=left] {b[N-1] = x + N-1;};
\draw (453,76.4) node [anchor=north west][inner sep=0.75pt]    {$\QQ_{N-1}$};
\draw (452,176.4) node [anchor=north west][inner sep=0.75pt]    {$\peel( \PP_{N})$};
\draw (522,37) node [anchor=north west][inner sep=0.75pt]   [align=left] {for(i=0; i$<$N-1; i++)};
\draw (522,102) node [anchor=north west][inner sep=0.75pt]   [align=left] {for(j=0; j$<$N-1; j++)};
\draw (541,58) node [anchor=north west][inner sep=0.75pt]   [align=left] {x=x+(N-1)*(N-1);};
\draw (541,73) node [anchor=north west][inner sep=0.75pt]   [align=left] {a[i] = a[i] + N-1;};
\draw (522,218) node [anchor=north west][inner sep=0.75pt]   [align=left] {x = x + N*N;};
\draw (523,233) node [anchor=north west][inner sep=0.75pt]   [align=left] {a[N-1] = a[N-1]+N;};
\draw (668,74.4) node [anchor=north west][inner sep=0.75pt]    {$\PP_{N-1}$};
\draw (671,239.4) node [anchor=north west][inner sep=0.75pt]    {$\partial \PP_{N}$};
\draw (520,256.92) node [anchor=north west][inner sep=0.75pt]   [align=left] {for(k=0; k$<$N-1; k++)};
\draw (527.67,275) node [anchor=north west][inner sep=0.75pt]   [align=left] {b[k] = b[k] + \\(N-1)*(2*N-1)+N*N;};
\draw (5,20) node [anchor=north west][inner sep=0.75pt]   [align=left] {x = 0;};
\draw (145,21) node [anchor=north west][inner sep=0.75pt]   [align=left] {x = 0;};
\draw (309,20) node [anchor=north west][inner sep=0.75pt]   [align=left] {x = 0;};
\draw (524,21) node [anchor=north west][inner sep=0.75pt]   [align=left] {x = 0;};
\draw (542.33,125) node [anchor=north west][inner sep=0.75pt]   [align=left] {b[j] = x + j;};
\draw (522,158) node [anchor=north west][inner sep=0.75pt]   [align=left] {for(i=0; i$<$N-1; i++)};
\draw (539,178) node [anchor=north west][inner sep=0.75pt]   [align=left] {x = x + 2*N-1;\\a[i] = a[i] + 1;};
\draw (521.67,318) node [anchor=north west][inner sep=0.75pt]   [align=left] {b[N-1] = x + N-1;};
\draw (10,1) node [anchor=north west][inner sep=0.75pt] [font=\normalsize] [align=left] {// {\small $\mathtt{\varphi(N)=true}$}};
\draw (6,268) node [anchor=north west][inner sep=0.75pt] [font=\normalsize] [align=left] {//{\small $\mathtt{\psi(N)=}$}\\
~~{\small $\mathtt{(\forall j.~ b[j]=j+N^{3})}$}};

\end{tikzpicture}

}
\caption{Pictorial Depiction of our Program Transformations}
\label{fig:high-lvl}
\end{figure}
Like in~\cite{tacas20,brain}, our technique uses induction on $N$ to
prove the Hoare triple $\{\varphi(N)\}\;\PP_N\;\{\psi(N)\}$ for all $N
> 0$.  Hence, our base case and inductive hypothesis 
are the same as those in~\cite{tacas20,brain}.  However, our reasoning
in the crucial inductive step is significantly different from that
in~\cite{tacas20,brain}, and this is where our primary contribution
lies.  As we show later, not only does this allow a much larger class
of programs to be efficiently verified compared
to~\cite{tacas20,brain}, it also permits reasoning about classes of
programs with nested loops, that are beyond the reach
of~\cite{tacas20,brain}.  Since the work of~\cite{tacas20}
significantly generalizes that of \cite{brain}, henceforth, we only
refer to~\cite{tacas20} when talking of earlier work that uses
induction on $N$.

In order to better understand our contribution and its difference
vis-a-vis the work of~\cite{tacas20}, a quick recap of the inductive
step used in~\cite{tacas20} is essential.  The inductive step
in~\cite{tacas20} crucially relies on finding a ``difference program''
$\partial \PP_N$ and a ``difference pre-condition'' $\partial
\varphi(N)$ such that: (i) $\PP_N$ is semantically equivalent to
$\PP_{N-1} ; \partial \PP_N$, where ';' denotes sequential composition
of programs\footnote{Although the authors
  of~\cite{tacas20} mention that it suffices to find a $\partial
  \PP_N$ that satisfies $\{\varphi(N)\}\;\PP_{N-1}; \partial \PP_N
  \;\{\psi(N)\}$, they do not discuss any technique that takes
  $\varphi(N)$ or $\psi(N)$ into account when generating $\partial
  \PP_N$.}, (ii) $\varphi(N) \Rightarrow \varphi(N-1)\wedge\partial
\varphi(N)$, and (iii) no variable/array element in $\partial
\varphi(N)$ is modified by $\PP_{N-1}$.  As shown in~\cite{tacas20},
once $\partial \PP_N$ and $\partial \varphi(N)$ satisfying these
conditions are obtained, the problem of proving $\{\varphi(N)\} \;
\PP_N \; \{\psi(N)\}$ can be reduced to that of proving $\{\psi(N-1)
\wedge \partial \varphi(N)\}\;\partial \PP_N\; \{\psi(N)\}$.  This
approach can be very effective if (i) $\partial \PP_N$ is ``simpler''
(e.g. has fewer loops or strictly less deeply nested loops) than
$\PP_N$ and can be computed efficiently, and (ii) a formula $\partial
\varphi(N)$ satisfying the conditions mentioned above exists and can
be computed efficiently.

The requirement of $\PP_N$ being semantically equivalent to
$\PP_{N-1};\partial \PP_N$ is a very stringent one, and finding such a
program $\partial \PP_N$ is non-trivial in general.  In fact, the
authors of~\cite{tacas20} simply provide a set of syntax-guided
conditionally sound heuristics for computing $\partial \PP_N$.
Unfortunately, when these conditions are violated (we have found many
simple programs where they are violated), there are no known
algorithmic techniques to generate $\partial \PP_N$ in a sound manner.
Even if a program $\partial \PP_N$ were to be found in an ad-hoc
manner, it may be as ``complex'' as $\PP_N$ itself.  This makes the
approach of~\cite{tacas20} ineffective for analyzing such programs.
As an example, the fourth column of Fig.~\ref{fig:high-lvl} shows
$\PP_{N-1}$ followed by one possible $\partial \PP_N$ that ensures
$\PP_N$ (shown in the first column of the same figure) is semantically
equivalent to $\PP_{N-1}; \partial \PP_N$.  Notice that $\partial
\PP_N$ in this example has two sequentially composed loops, just like
$\PP_N$ had. In addition, the assignment statement in the body of the
second loop uses a more complex expression than that present in the
corresponding loop of $\PP_N$.  Proving $\{\psi(N-1) \wedge \partial
\varphi(N)\}\;\partial \PP_N\; \{\psi(N)\}$ may therefore not be any
simpler (perhaps even more difficult) than proving $\{\varphi(N)\}\;
\PP_N\;\{\psi(N)\}$.

In addition to the difficulty of computing $\partial \PP_N$, it may be
impossible to find a formula $\partial \varphi(N)$ such that
$\varphi(N) \Rightarrow \varphi(N-1) \wedge \partial \varphi(N)$, as
required by~\cite{tacas20}.  This can happen even for fairly routine
pre-conditions, such as $\varphi(N) \equiv \big(\bigwedge_{i=0}^{N-1}
A[i] = N\big)$.  Notice that there is no $\partial \varphi(N)$ that
satisfies
$\varphi(N) \Rightarrow \varphi(N-1)\wedge \partial \varphi(N)$ in
this case.  In such cases, the technique of~\cite{tacas20} cannot be
used at all, even if $\PP_N$, $\varphi(N)$ and $\psi(N)$ are such that there exists
a trivial proof of $\{\varphi(N)\}\;\PP_N\;\{\psi(N)\}$.

The inductive step proposed in this paper largely mitigates the above
problems, thereby making it possible to efficiently reason about a
much larger class of programs than that possible using the
technique of~\cite{tacas20}.  Our inductive step proceeds as follows.
Given $\PP_N$, we first algorithmically construct two programs
$\QQ_{N-1}$ and $\peel(\PP_N)$, such that $\PP_N$ is semantically
equivalent to $\QQ_{N-1}; \peel(\PP_N)$.  Intuitively, $\QQ_{N-1}$ is
the same as $\PP_N$, but with all loop bounds that depend on $N$ now
modified to depend on $N-1$ instead.  Note that this is different from
$\PP_{N-1}$, which is obtained by replacing \emph{all uses} (not just
in loop bounds) of $N$ in $\PP_N$ by $N-1$.  As we will see, this
simple difference makes the generation of $\peel(\PP_N)$ significantly
simpler than generation of $\partial \PP_N$, as in~\cite{tacas20}.
While generating $\QQ_{N-1}$ and $\peel(P_N)$ may sound similar to
generating $\PP_{N-1}$ and $\partial \PP_N$~\cite{tacas20}, there are
fundamental differences between the two approaches.  First, as noted
above, $\PP_{N-1}$ is semantically different from $\QQ_{N-1}$.
Similarly, $\peel(\PP_N)$ is also semantically different from
$\partial \PP_N$.  Second, we provide an algorithm for generating
$\QQ_{N-1}$ and $\peel(\PP_N)$ that works for a significantly larger
class of programs than that for which the technique of~\cite{tacas20}
works.  Specifically, our algorithm works for all programs amenable to
the technique of~\cite{tacas20}, and also for programs that violate
the restrictions imposed by the grammar and conditional heuristics
in~\cite{tacas20}.  For example, we can algorithmically generate
$\QQ_{N-1}$ and $\peel(\PP_N)$ even for a class of programs with
arbitrarily nested loops -- a program feature explicitly disallowed by
the grammar in~\cite{tacas20}.  Third, we guarantee that
$\peel(\PP_N)$ is ``simpler'' than $\PP_N$ in the sense that the
maximum nesting depth of loops in $\peel(\PP_{N})$ is
\emph{strictly less} than that in $\PP_N$.  Thus, if $\PP_N$ has no
nested loops (all programs amenable to analysis by~\cite{tacas20}
belong to this class), $\peel(\PP_N)$ is guaranteed to be loop-free.
As demonstrated by the fourth column of Fig.~\ref{fig:high-lvl}, no such
guarantees can be given for $\partial \PP_N$ generated by the
technique of~\cite{tacas20}.  This is a significant difference, since
it greatly simplifies the analysis of $\peel(\PP_N)$ vis-a-vis that of
$\partial \PP_N$.

We had mentioned earlier that some pre-conditions $\varphi(N)$ do not
admit any $\partial \varphi(N)$ such that $\varphi(N) \Rightarrow
\varphi(N-1) \wedge \partial \varphi(N)$. It is, however, often easy
to compute formulas $\varphi'(N-1)$ and $\Delta \varphi'(N)$ in such
cases such that $\varphi(N) \Rightarrow \varphi'(N-1) \wedge \Delta
\varphi'(N)$, and the variables/array elements in $\Delta \varphi'(N)$
are not modified by either $\PP_{N-1}$ or $\QQ_{N-1}$.  For example,
if we were to consider a (new) pre-condition $\varphi(N) \equiv \big(\bigwedge_{i=0}^{N-1} A[i] = N\big)$ for
the program $\PP_N$ shown in the first column of
Fig.~\ref{fig:high-lvl}, then we have $\varphi'(N-1) \equiv
\big(\bigwedge_{i=0}^{N-2} A[i] = N\big)$ and $\Delta \varphi'(N) \equiv \big(A[N-1]
= N\big)$.  We assume the availability of such a $\varphi'(N-1)$ and
$\Delta \varphi'(N)$ for the given $\varphi(N)$.  This significantly
relaxes the requirement on pre-conditions and allows a much larger
class of Hoare triples to be proved using our technique vis-a-vis 
that of~\cite{tacas20}.

The third column of Fig.~\ref{fig:high-lvl} shows $\QQ_{N-1}$ and
$\peel(\PP_N)$ generated by our algorithm for the program $\PP_N$ in
the first column of the figure.  It is illustrative to compare these
with $\PP_{N-1}$ and $\partial \PP_N$ shown in the fourth column of
Fig.~\ref{fig:high-lvl}.  Notice that $\QQ_{N-1}$ has the same control
flow structure as $\PP_{N-1}$, but is not semantically equivalent to
$\PP_{N-1}$.  In fact, $\QQ_{N-1}$ and $\PP_{N-1}$ may be viewed as
closely related versions of the same program.  Let $V_\QQ$ and $V_\PP$
denote the set of variables of $\QQ_{N-1}$ and $\PP_{N-1}$
respectively.  We assume $V_\QQ$ is disjoint from $V_\PP$, and analyze
the joint execution of $\QQ_{N-1}$ starting from a state satisfying
the pre-condition $\varphi'(N-1)$, and $\PP_{N-1}$ starting from a
state satisfying $\varphi(N-1)$.  The purpose of this analysis is to
compute a difference predicate $D(V_\QQ, V_\PP, N-1)$ that relates
corresponding variables in $\QQ_{N-1}$ and $\PP_{N-1}$ at the end of
their joint execution.  The above problem is reminiscent of (yet,
different from) translation
validation~\cite{necula00,zuck02,zaks08,barthe11,sharma13,dahiya17,gupta20},
and indeed, our calculation of $D(V_\QQ, V_\PP, N-1)$ is motivated by
techniques from the translation validation literature.  An important
finding of our study is that corresponding variables in $\QQ_{N-1}$
and $\PP_{N-1}$ are often related by simple expressions on $N$,
regardless of the complexity of $\PP_N$, $\varphi(N)$ or $\psi(N)$.
Indeed, in all our experiments, we didn't need to go beyond quadratic
expressions on $N$ to compute $D(V_\QQ, V_\PP, N-1)$.

Once the steps described above are completed, we have
$\Delta \varphi'(N)$, $\peel(\PP_N)$ and $D(V_\QQ, V_\PP, N-1)$.  It
can now be shown that if the inductive hypothesis,
i.e. $\{\varphi(N-1)\} \; \PP_{N-1}\; \{\psi(N-1)\}$ holds, then
proving $\{\varphi(N)\} \;\PP_N\; \{\psi(N)\}$ reduces to proving
$\{\Delta \varphi'(N)
~\wedge~ \psi'(N-1) \} \; \peel(\PP_N) \; \{\psi(N)\}$, where
$\psi'(N-1) \equiv \exists V_\PP \big(\psi(N-1) \wedge D(V_\QQ, V_\PP,
N-1)\big)$.  A few points are worth emphasizing here.  First, if
$D(V_\QQ, V_\PP, N-1)$ is obtained as a set of equalities, the
existential quantifier in the formula $\psi'(N-1)$ can often be
eliminated simply by substitution.  We can also use quantifier
elimination capabilities of modern SMT solvers, viz. Z3~\cite{z3}, to
eliminate the quantifier, if needed.  Second, recall that unlike
$\partial \PP_N$ generated by the technique of~\cite{tacas20},
$\peel(\PP_N)$ is guaranteed to be ``simpler'' than $\PP_N$, and is
indeed loop-free if $\PP_N$ has no nested loops.  Therefore, proving
$\{\Delta \varphi'(N) ~\wedge~ \psi'(N-1)\} \; \peel(\PP_N) \;
\{\psi(N)\}$ is typically significantly simpler than proving
$\{\psi(N-1) \wedge \partial \varphi(N)\} \; \partial \PP_N \;
\{\psi(N)\}$.  Finally, it may happen that the
pre-condition in $\{\Delta \varphi'(N)
~\wedge~ \psi'(N-1)\} \; \peel(\PP_N) \; \{\psi(N)\}$ is not strong
enough to yield a proof of the Hoare triple.  In such cases, we need
to strengthen the existing pre-condition by a formula, say
$\xi'(N-1)$, such that the strengthened pre-condition implies the
weakest pre-condition of $\psi(N)$ under $\peel(\PP_N)$.  Having a
simple structure for $\peel(\PP_N)$ (e.g., loop-free for the entire
class of programs for which~\cite{tacas20} works) makes it
significantly easier to compute the weakest pre-condition.  Note that
$\xi'(N-1)$ is defined over the variables in $V_\QQ$.  In order to
ensure that the inductive proof goes through, we need to strengthen
the post-condition of the original program by $\xi(N)$ such that
$\xi(N-1) \wedge D(V_\QQ, V_\PP, N-1) \Rightarrow \xi'(N-1)$.
Computing $\xi(N-1)$ requires a special form of logical abduction that
ensures that $\xi(N-1)$ refers only to variables in $V_P$.  However,
if $D(V_\QQ, V_\PP, N-1)$ is given as a set of equalities (as is often
the case), $\xi(N-1)$ can be computed from $\xi'(N-1)$ simply by
substitution.  This process of strengthening the pre-condition and
post-condition may need to iterate a few times until a fixed point is
reached, similar to what happens in the inductive step
of~\cite{tacas20}.  Note that the fixed point iterations may not
always converge (verification is undecidable in general).  However, in
our experiments, convergence always happened within a few iterations.
If $\xi'(N-1)$ denotes the formula obtained on reaching the fixed
point, the final Hoare triple to be proved is
$\{\xi'(N-1) \wedge \Delta \varphi'(N)
~\wedge~ \psi'(N-1)\} \; \peel(\PP_N) \; \{\xi(N) \wedge \psi(N)\}$,
where $\psi'(N-1) \equiv \exists V_\PP \big(\psi(N-1) \wedge D(V_\QQ,
V_\PP, N-1)\big)$.  Having a simple (often loop-free) $\peel(\PP_N)$
significantly simplifies the above process.  

We conclude this section by giving an overview of how $\QQ_{N-1}$ and
$\peel(\PP_N)$ are computed for the program $\PP_N$ shown in the first
column of Fig.~\ref{fig:high-lvl}.  The second column of this figure
shows the program obtained from $\PP_N$ by peeling the last iteration
of each loop of the program.  Clearly, the programs in the first and
second columns are semantically equivalent. Since there are no nested
loops in $\PP_N$, the peels (shown in solid boxes) in the second
column are loop-free program fragments.  For each such peel, we
identify variables/array elements modified in the peel and used in
subsequent non-peeled parts of the program.  For example, the variable
{\tt x} is modified in the peel of the first loop and used in the body
of the second loop, as shown by the arrow in the second column of
Fig.~\ref{fig:high-lvl}.  We replace all such uses (if needed,
transitively) by expressions on the right-hand side of assignments in
the peel until no variable/array element modified in the peel is used
in any subsequent non-peeled part of the program. Thus, the use of
{\tt x} in the body of the second loop is replaced by the expression
{\tt x + N*N} in the third column of Fig.~\ref{fig:high-lvl}.
The peeled iteration of the first loop can now be moved to the end of
the program, since the variables modified in this peel are no longer
used in any subsequent non-peeled part of the program.  Repeating the
above steps for the peeled iteration of the second loop, we get the
program shown in the third column of Fig.~\ref{fig:high-lvl}.  This
effectively gives a transformed program that can be divided into two
parts: (i) a program $\QQ_{N-1}$ that differs from $\PP_{N}$ only in
that all loops are truncated to iterate $N-1$ (instead of $N$) times,
and (ii) a program $\peel(\PP_N)$ that is obtained by concatenating
the peels of loops in $\PP_N$ in the same order in which the loops
appeared in $\PP_N$.  It is not hard to see that $\PP_N$, shown in the
first column of Fig.~\ref{fig:high-lvl}, is semantically equivalent to
$\QQ_{N-1}; \peel(\PP_N)$.
Notice that the construction of $\QQ_{N-1}$ and $\peel(\PP_N)$ was
fairly straightforward, and did not require any complex reasoning.  In
sharp contrast, construction of $\partial \PP_N$, as shown in the bottom
half of fourth column of Fig.~\ref{fig:high-lvl}, requires
non-trivial reasoning, and produces a program with two
sequentially composed loops.

\section{Preliminaries and Notation}
\label{sec:prelims}
We consider programs generated by the grammar shown below:

\begin{center}
\begin{tabular}{rcl}
  {\PB} & ::= & \Stmt\\
  \Stmt & ::= & \Stmt~;~\Stmt ~$\mid$~ {\scVar} := \EE ~$\mid$~ {\ArVar}[\EE] := \EE ~$\mid$~
                {\iif}(\BoolE) {\tthen} {\Stmt} {\eelse} \Stmt ~$\mid$~\\
        &     & {\ffor} ({\lpVar} := 0; {\lpVar} $<$ {\UB}; {\lpVar} := {\lpVar}+1) ~\{{\Stmt}\}\\
 \EE    & ::= & \EE ~\OP~ \EE ~$\mid$~ {\ArVar}[\EE] ~$\mid$~ {\scVar} ~$\mid$~ {\lpVar} ~$\mid$~ {\cconst} ~$\mid$~ $N$ \\
 \OP    & ::= & + ~$\mid$~ - ~$\mid$~ * ~$\mid$~ / \\
 \UB    & ::= & \UB ~\OP~ \UB ~$\mid$~ {\lpVar} ~$\mid$~ {\cconst} ~$\mid$~ $N$ \\
 \BoolE & ::= & \EE ~$\mathsf{relop}$ \EE ~$\mid$~ {\BoolE} $\mathsf{AND}$ {\BoolE} ~$\mid$~
                $\mathsf{NOT}$ {\BoolE} ~$\mid$~ {\BoolE} $\mathsf{OR}$ {\BoolE}
\end{tabular}
\end{center}

\noindent
Formally, we consider a program $\PP_N$ to be a tuple $(\mathcal{V},
\mathcal{L}, \mathcal{A}, {\PB}, N)$, where $\mathcal{V}$ is a set of
scalar variables, $\mathcal{L} \subseteq \mathcal{V}$ is a set of
scalar loop counter variables, $\mathcal{A}$ is a set of array
variables, ${\PB}$ is the program body, and $N$ is a special symbol
denoting a positive integer parameter of the program.  In the grammar
shown above, we assume that ${\ArVar} \in \mathcal{A}$, ${\scVar} \in
\mathcal{V} \setminus \mathcal{L}$, ${\lpVar} \in \mathcal{L}$ and
${\cconst}\in \integers$.  We also assume that each loop ${\LL}$ has a
unique loop counter variable $\ell$ that is initialized at the
beginning of ${\LL}$ and is incremented by $1$ at the end of each
iteration.  We assume that the assignments in the body of ${\LL}$ do
not update $\ell$.  For each loop $\LL$ with termination condition
$\ell < \UB$, we require that $\UB$ is an expression in terms of $N$,
variables in $\mathcal{L}$ representing loop counters of loops that
nest $\LL$, and constants as shown in the grammar.
Our grammar allows a large class of programs (with nested loops) to be
analyzed using our technique, and that are beyond the reach of
state-of-the-art tools like~\cite{veriabs,tacas20,viap}.

We verify Hoare triples of the form $\{\varphi(N)\} \;\PP_N\;
\{\psi(N)\}$, where the formulas $\varphi(N)$ and $\psi(N)$ are
either universally quantified formulas of the form $\forall I\,
\left(\alpha(I, N) \Rightarrow \beta(\mathcal{A}, \mathcal{V}, I, N)\right)$
or quantifier-free formulas of the form $\eta(\mathcal{A}, \mathcal{V},
N)$.  In these formulas, $I$ is a sequence of array index variables,
$\alpha$ is a quantifier-free formula in the theory of arithmetic over
integers, and $\beta$ and $\eta$ are quantifier-free formulas in the
combined theory of arrays and arithmetic over integers.
Our technique can also verify a restricted set of existentially
quantified post-conditions.  We give a few illustrative examples
in the Appendix.

For technical reasons, we rename all scalar and array variables in the
program in a pre-processing step as follows.  We rename each scalar
variable using the well-known Static Single Assignment
(SSA)~\cite{ssa} technique, such that the variable is written at (at
most) one location in the program.  We also rename arrays in the
program such that each loop updates its own version of an array and
multiple writes to an array element within the same loop are performed
on different versions of that array.  We use techniques for array
SSA~\cite{arrayssa} renaming studied earlier in the context of
compilers, for this purpose.  In the subsequent exposition, we assume
that scalar and array variables in the program are already SSA renamed,
and that all array and scalar variables referred to in the pre- and
post-conditions are also expressed in terms of SSA renamed
arrays and scalars.



\section{Verification using Difference Invariants}
\label{sec:algorithms}
The key steps in the application of our technique, as discussed in
Section~\ref{sec:overview}, are
\begin{itemize}
\item[A1:] Generation of $\QQ_{N-1}$ and $\peel(\PP_N)$ from a given
  $\PP_N$.
\item[A2:] Generation of $\varphi'(N-1)$ and $\Delta \varphi'(N)$ from
  a given $\varphi(N)$.
\item[A3:] Generation of the difference invariant $D(V_\QQ, V_\PP, N-1)$,
  given $\varphi(N-1)$, $\varphi'(N-1)$, $\QQ_{N-1}$ and $\PP_{N-1}$.
\item[A4:] Proving $\{\Delta \varphi'(N) ~\wedge~ \exists V_\PP
  \big(\psi(N-1) \wedge D(V_\QQ, V_\PP, N-1)\big)\} \; \peel(\PP_N) \;
  \{\psi(N)\}$, possibly by generation of $\xi'(N-1)$ and $\xi(N)$ to
  strengthen the pre- and post-conditions, respectively.
\end{itemize}
We now discuss techniques for solving each of these sub-problems.

\subsection{Generating $\QQ_{N-1}$ and $\peel(\PP_N)$}\label{sec:genprog}

The procedure illustrated in Fig.~\ref{fig:high-lvl} (going from the
first column to the third column) is fairly straightforward if none of
the loops have any nested loops within them.  It is easy to extend
this to arbitrary sequential compositions of non-nested loops.  Having
all variables and arrays in SSA-renamed forms makes it particularly
easy to carry out the substitution exemplified by the arrow shown in
the second column of Fig.~\ref{fig:high-lvl}.  Hence, we don't discuss
any further the generation of $\QQ_{N-1}$ and $\peel(\PP_N)$ when all
loops are non-nested.

\begin{wrapfigure}[9]{r}{0.5\textwidth}
\vspace{-2.5em}

 
\tikzset{
pattern size/.store in=\mcSize, 
pattern size = 5pt,
pattern thickness/.store in=\mcThickness, 
pattern thickness = 0.3pt,
pattern radius/.store in=\mcRadius, 
pattern radius = 1pt}
\makeatletter
\pgfutil@ifundefined{pgf@pattern@name@_1tbtatztq}{
\pgfdeclarepatternformonly[\mcThickness,\mcSize]{_1tbtatztq}
{\pgfqpoint{0pt}{0pt}}
{\pgfpoint{\mcSize+\mcThickness}{\mcSize+\mcThickness}}
{\pgfpoint{\mcSize}{\mcSize}}
{
\pgfsetcolor{\tikz@pattern@color}
\pgfsetlinewidth{\mcThickness}
\pgfpathmoveto{\pgfqpoint{0pt}{0pt}}
\pgfpathlineto{\pgfpoint{\mcSize+\mcThickness}{\mcSize+\mcThickness}}
\pgfusepath{stroke}
}}
\makeatother

 
\tikzset{
pattern size/.store in=\mcSize, 
pattern size = 5pt,
pattern thickness/.store in=\mcThickness, 
pattern thickness = 0.3pt,
pattern radius/.store in=\mcRadius, 
pattern radius = 1pt}
\makeatletter
\pgfutil@ifundefined{pgf@pattern@name@_zyc6vky2c}{
\pgfdeclarepatternformonly[\mcThickness,\mcSize]{_zyc6vky2c}
{\pgfqpoint{-\mcThickness}{-\mcThickness}}
{\pgfpoint{\mcSize}{\mcSize}}
{\pgfpoint{\mcSize}{\mcSize}}
{
\pgfsetcolor{\tikz@pattern@color}
\pgfsetlinewidth{\mcThickness}
\pgfpathmoveto{\pgfpointorigin}
\pgfpathlineto{\pgfpoint{0}{\mcSize}}
\pgfusepath{stroke}
}}
\makeatother

 
\tikzset{
pattern size/.store in=\mcSize, 
pattern size = 5pt,
pattern thickness/.store in=\mcThickness, 
pattern thickness = 0.3pt,
pattern radius/.store in=\mcRadius, 
pattern radius = 1pt}
\makeatletter
\pgfutil@ifundefined{pgf@pattern@name@_wh328fu5g}{
\pgfdeclarepatternformonly[\mcThickness,\mcSize]{_wh328fu5g}
{\pgfqpoint{0pt}{-\mcThickness}}
{\pgfpoint{\mcSize}{\mcSize}}
{\pgfpoint{\mcSize}{\mcSize}}
{
\pgfsetcolor{\tikz@pattern@color}
\pgfsetlinewidth{\mcThickness}
\pgfpathmoveto{\pgfqpoint{0pt}{\mcSize}}
\pgfpathlineto{\pgfpoint{\mcSize+\mcThickness}{-\mcThickness}}
\pgfusepath{stroke}
}}
\makeatother
\tikzset{every picture/.style={line width=0.75pt}} 

\resizebox{0.5\textwidth}{!}{

\begin{tikzpicture}[x=0.75pt,y=0.75pt,yscale=-1,xscale=1]

\draw  [pattern=_1tbtatztq,pattern size=6pt,pattern thickness=0.75pt,pattern radius=0pt, pattern color={rgb, 255:red, 0; green, 0; blue, 0}] (55.5,24) -- (211.5,24) -- (211.5,44.83) -- (55.5,44.83) -- cycle ;
\draw  [pattern=_zyc6vky2c,pattern size=6pt,pattern thickness=0.75pt,pattern radius=0pt, pattern color={rgb, 255:red, 0; green, 0; blue, 0}] (78,72.37) .. controls (78,69.96) and (79.96,68) .. (82.37,68) -- (222.8,68) .. controls (225.21,68) and (227.17,69.96) .. (227.17,72.37) -- (227.17,85.47) .. controls (227.17,87.88) and (225.21,89.83) .. (222.8,89.83) -- (82.37,89.83) .. controls (79.96,89.83) and (78,87.88) .. (78,85.47) -- cycle ;
\draw  [pattern=_wh328fu5g,pattern size=6pt,pattern thickness=0.75pt,pattern radius=0pt, pattern color={rgb, 255:red, 0; green, 0; blue, 0}] (55.5,98) -- (211.5,98) -- (211.5,118.83) -- (55.5,118.83) -- cycle ;
\draw   (54.42,49.25) .. controls (49.75,49.25) and (47.42,51.58) .. (47.42,56.25) -- (47.42,58.75) .. controls (47.42,65.42) and (45.09,68.75) .. (40.42,68.75) .. controls (45.09,68.75) and (47.42,72.08) .. (47.42,78.75)(47.42,75.75) -- (47.42,81.25) .. controls (47.42,85.92) and (49.75,88.25) .. (54.42,88.25) ;
\draw   (29,5.25) .. controls (24.33,5.25) and (22,7.58) .. (22,12.25) -- (22,54.75) .. controls (22,61.42) and (19.67,64.75) .. (15,64.75) .. controls (19.67,64.75) and (22,68.08) .. (22,74.75)(22,71.75) -- (22,117.25) .. controls (22,121.92) and (24.33,124.25) .. (29,124.25) ;

\draw (41,4) node [anchor=north west][inner sep=0.75pt]   [align=left] {for($\displaystyle \ell _{1}$=0; $\displaystyle \ell _{1}$$\displaystyle < $$\displaystyle N$; $\displaystyle \ell _{1}$++)};
\draw (57.5,48.83) node [anchor=north west][inner sep=0.75pt]   [align=left] {for($\displaystyle \ell _{2}$=0; $\displaystyle \ell _{2}$$\displaystyle < $$\displaystyle N$; $\displaystyle \ell _{2}$++)};
\draw (26,63) node [anchor=north west][inner sep=0.75pt]   [align=left] {$\displaystyle \mathsf{L}_{2}$};
\draw (0,61) node [anchor=north west][inner sep=0.75pt]   [align=left] {$\displaystyle \mathsf{L}_{1}$};


\draw (215,26.25) node [anchor=north west][inner sep=0.75pt]   [align=left] {B1};
\draw (230,71.25) node [anchor=north west][inner sep=0.75pt]   [align=left] {B2};
\draw (214,101.25) node [anchor=north west][inner sep=0.75pt]   [align=left] {B3};

\end{tikzpicture}
}
\caption{A Generic Nested Loop}
\label{fig:nested}
\end{wrapfigure}

The case of nested loops is, however, challenging and requires
additional discussion.  Before we present an algorithm for handling
this case, we discuss the intuition using an abstract example.
Consider a pair of nested loops, $\LL_1$ and $\LL_2$, as shown in
Fig.~\ref{fig:nested}.  Suppose that {\tt B1} and {\tt B3} are
loop-free code fragments in the body of $\LL_1$ that precede and
succeed the nested loop $\LL_2$.  Suppose further that the loop body,
{\tt B2}, of $\LL_2$ is loop-free.  To focus on the key aspects of
computing peels of nested loops, we make two simplifying assumptions:
(i) no scalar variable or array element modified in {\tt B2} is used
subsequently (including transitively) in either {\tt B3} or {\tt B1},
and (ii) every scalar variable or array element that is modified in
{\tt B1} and used subsequently in {\tt B2}, is not modified again in
either {\tt B1}, {\tt B2} or {\tt B3}.  Note that these assumptions
are made primarily to simplify the exposition.  For a detailed
discussion on how our technique can be used even with some relaxations
of these assumptions, the reader is referred to~\cite{diffy-arxiv21}.
The peel of the abstract loops $\LL_1$ and $\LL_2$ is as shown in
Fig.~\ref{fig:peel-nested}.  The first loop in the peel includes the
last iteration of $\LL_2$ in each of the $N-1$ iterations of $\LL_1$,
that was missed in $\QQ_{N-1}$. The subsequent code includes the last
iteration of $\LL_1$ that was missed in $\QQ_{N-1}$.

\begin{wrapfigure}[11]{r}{0.4\textwidth}
\vspace{-2em}

 
\tikzset{
pattern size/.store in=\mcSize, 
pattern size = 5pt,
pattern thickness/.store in=\mcThickness, 
pattern thickness = 0.3pt,
pattern radius/.store in=\mcRadius, 
pattern radius = 1pt}
\makeatletter
\pgfutil@ifundefined{pgf@pattern@name@_3vxwom4jt}{
\pgfdeclarepatternformonly[\mcThickness,\mcSize]{_3vxwom4jt}
{\pgfqpoint{-\mcThickness}{-\mcThickness}}
{\pgfpoint{\mcSize}{\mcSize}}
{\pgfpoint{\mcSize}{\mcSize}}
{
\pgfsetcolor{\tikz@pattern@color}
\pgfsetlinewidth{\mcThickness}
\pgfpathmoveto{\pgfpointorigin}
\pgfpathlineto{\pgfpoint{0}{\mcSize}}
\pgfusepath{stroke}
}}
\makeatother

 
\tikzset{
pattern size/.store in=\mcSize, 
pattern size = 5pt,
pattern thickness/.store in=\mcThickness, 
pattern thickness = 0.3pt,
pattern radius/.store in=\mcRadius, 
pattern radius = 1pt}
\makeatletter
\pgfutil@ifundefined{pgf@pattern@name@_gbcn3umkl}{
\pgfdeclarepatternformonly[\mcThickness,\mcSize]{_gbcn3umkl}
{\pgfqpoint{0pt}{0pt}}
{\pgfpoint{\mcSize+\mcThickness}{\mcSize+\mcThickness}}
{\pgfpoint{\mcSize}{\mcSize}}
{
\pgfsetcolor{\tikz@pattern@color}
\pgfsetlinewidth{\mcThickness}
\pgfpathmoveto{\pgfqpoint{0pt}{0pt}}
\pgfpathlineto{\pgfpoint{\mcSize+\mcThickness}{\mcSize+\mcThickness}}
\pgfusepath{stroke}
}}
\makeatother

 
\tikzset{
pattern size/.store in=\mcSize, 
pattern size = 5pt,
pattern thickness/.store in=\mcThickness, 
pattern thickness = 0.3pt,
pattern radius/.store in=\mcRadius, 
pattern radius = 1pt}
\makeatletter
\pgfutil@ifundefined{pgf@pattern@name@_q5ukuj8r9}{
\pgfdeclarepatternformonly[\mcThickness,\mcSize]{_q5ukuj8r9}
{\pgfqpoint{-\mcThickness}{-\mcThickness}}
{\pgfpoint{\mcSize}{\mcSize}}
{\pgfpoint{\mcSize}{\mcSize}}
{
\pgfsetcolor{\tikz@pattern@color}
\pgfsetlinewidth{\mcThickness}
\pgfpathmoveto{\pgfpointorigin}
\pgfpathlineto{\pgfpoint{0}{\mcSize}}
\pgfusepath{stroke}
}}
\makeatother

 
\tikzset{
pattern size/.store in=\mcSize, 
pattern size = 5pt,
pattern thickness/.store in=\mcThickness, 
pattern thickness = 0.3pt,
pattern radius/.store in=\mcRadius, 
pattern radius = 1pt}
\makeatletter
\pgfutil@ifundefined{pgf@pattern@name@_uo9d5uh0d}{
\pgfdeclarepatternformonly[\mcThickness,\mcSize]{_uo9d5uh0d}
{\pgfqpoint{0pt}{-\mcThickness}}
{\pgfpoint{\mcSize}{\mcSize}}
{\pgfpoint{\mcSize}{\mcSize}}
{
\pgfsetcolor{\tikz@pattern@color}
\pgfsetlinewidth{\mcThickness}
\pgfpathmoveto{\pgfqpoint{0pt}{\mcSize}}
\pgfpathlineto{\pgfpoint{\mcSize+\mcThickness}{-\mcThickness}}
\pgfusepath{stroke}
}}
\makeatother
\tikzset{every picture/.style={line width=0.75pt}} 

\resizebox{0.4\textwidth}{!}{

\begin{tikzpicture}[x=0.75pt,y=0.75pt,yscale=-1,xscale=1]

\draw  [pattern=_3vxwom4jt,pattern size=6pt,pattern thickness=0.75pt,pattern radius=0pt, pattern color={rgb, 255:red, 0; green, 0; blue, 0}] (25,31.37) .. controls (25,28.96) and (26.96,27) .. (29.37,27) -- (183.05,27) .. controls (185.46,27) and (187.42,28.96) .. (187.42,31.37) -- (187.42,44.47) .. controls (187.42,46.88) and (185.46,48.83) .. (183.05,48.83) -- (29.37,48.83) .. controls (26.96,48.83) and (25,46.88) .. (25,44.47) -- cycle ;
\draw  [pattern=_gbcn3umkl,pattern size=6pt,pattern thickness=0.75pt,pattern radius=0pt, pattern color={rgb, 255:red, 0; green, 0; blue, 0}] (6.5,62) -- (174.42,62) -- (174.42,82.83) -- (6.5,82.83) -- cycle ;
\draw  [pattern=_q5ukuj8r9,pattern size=6pt,pattern thickness=0.75pt,pattern radius=0pt, pattern color={rgb, 255:red, 0; green, 0; blue, 0}] (29,112.37) .. controls (29,109.96) and (30.96,108) .. (33.37,108) -- (183.05,108) .. controls (185.46,108) and (187.42,109.96) .. (187.42,112.37) -- (187.42,125.47) .. controls (187.42,127.88) and (185.46,129.83) .. (183.05,129.83) -- (33.37,129.83) .. controls (30.96,129.83) and (29,127.88) .. (29,125.47) -- cycle ;
\draw  [pattern=_uo9d5uh0d,pattern size=6pt,pattern thickness=0.75pt,pattern radius=0pt, pattern color={rgb, 255:red, 0; green, 0; blue, 0}] (6.5,138) -- (174.42,138) -- (174.42,158.83) -- (6.5,158.83) -- cycle ;

\draw (7,7) node [anchor=north west][inner sep=0.75pt]   [align=left] {for($\displaystyle \ell _{1}$=0; $\displaystyle \ell _{1}$$\displaystyle < $$\displaystyle N-1$; $\displaystyle \ell _{1}$++)};
\draw (8.5,87.83) node [anchor=north west][inner sep=0.75pt]   [align=left] {for($\displaystyle \ell _{2}$=0; $\displaystyle \ell _{2}$$\displaystyle < $$\displaystyle N$; $\displaystyle \ell _{2}$++)};


\draw (190,30) node [anchor=north west][inner sep=0.75pt]   [align=left] {B2};
\draw (190,111) node [anchor=north west][inner sep=0.75pt]   [align=left] {B2};
\draw (177,64) node [anchor=north west][inner sep=0.75pt]   [align=left] {B1};
\draw (176,140) node [anchor=north west][inner sep=0.75pt]   [align=left] {B3};

\end{tikzpicture}
}
\caption{Peel of the Nested Loop}
\label{fig:peel-nested}
\end{wrapfigure}

Formally, we use the notation $\LL_1${\tt(N)} to denote a loop $\LL_1$
that has no nested loops within it, and its loop counter, say
$\ell_1$, increases from $0$ to an upper bound that is given by an
expression in $N$.
Similarly, we use {\tt $\LL_1$(N, $\LL_2$(N))} to denote a loop
$\LL_1$ that has another loop $\LL_2$ nested within it.  The loop
counter $\ell_1$ of $\LL_1$ increases from $0$ to an upper bound
expression in $N$, while the loop counter $\ell_2$ of $\LL_2$
increases from $0$ to an upper bound expression in $\ell_1$ and $N$.
Using this notation, {\tt $\LL_1$(N, $\LL_2$(N, $\LL_3$(N)))}
represents three nested loops, and so on.  Notice that the upper bound
expression for a nested loop can depend not only on $N$ but also on
the loop counters of other loops nesting it. For notational clarity,
we also use {\tt LPeel($\LL_i$, a, b)} to denote the peel of loop
$\LL_i$ consisting of all iterations of $\LL_i$ where the value of
$\ell_i$ ranges from {\tt a} to {\tt b-1}, both inclusive. Note that
if {\tt b-a} is a constant, this corresponds to the concatenation of
{\tt (b-a)} peels of $\LL_i$.

\begin{wrapfigure}[5]{r}{0.47\textwidth}
\vspace{-3em}
{\footnotesize
\begin{alltt}
for(\(\ell\sb{1}\)=0; \(\ell\sb{1}\)<\(U\sb{\LL\sb{1}}\)(N-1); \(\ell\sb{1}\)++)
  LPeel(\(\LL\sb{2}\), \(U\sb{\LL\sb{2}}\)(\(\ell\sb{1}\),N-1), \(U\sb{\LL\sb{2}}\)(\(\ell\sb{1}\),N))
LPeel(\(\LL\sb{1}\), \(U\sb{\LL\sb{1}}\)(N-1), \(U\sb{\LL\sb{1}}\)(N))
\end{alltt}
}
\vspace{-1em}
\caption{Peel of {\tt $\LL_1$(N, $\LL_2$(N))}}
\label{fig:peell1l2}
\end{wrapfigure}

We will now try to see how we can implement the transformation from
the first column to the second column of Fig.~\ref{fig:high-lvl} for a
nested loop {\tt $\LL_1$(N, $\LL_2$(N))}.  The first step is to
truncate all loops to use $N-1$ instead of $N$ in the upper bound
expressions.  Using the notation introduced above, this gives the loop
{\tt $\LL_1$(N-1, $\LL_2$(N-1))}.  Note that all uses of $N$ other
than in loop upper bound expressions stay unchanged as we go from {\tt
  $\LL_1$(N, $\LL_2$(N))} to {\tt $\LL_1$(N-1, $\LL_2$(N-1))}. We now
ask: \emph{Which are the loop iterations of {\tt $\LL_1$(N,
    $\LL_2$(N))} that have been missed (or skipped) in going to {\tt
    $\LL_1$(N-1, $\LL_2$(N-1))}?}  Let the upper bound expression of
$\LL_1$ in {\tt $\LL_1$(N, $\LL_2$(N))} be $U_{\LL_1}(N)$, and that of
$\LL_2$ be $U_{\LL_2}(\ell_1, N)$.  It is not hard to see that in
every iteration $\ell_1$ of $\LL_1$, where $0 \le \ell_1 <
U_{\LL_1}(N-1)$, the iterations corresponding to $\ell_2 \in
\{U_{\LL_2}(\ell_1, N-1), \ldots, U_{\LL_2}(\ell_1, N)-1\}$ have been
missed.  In addition, all iterations of $\LL_1$ corresponding to
$\ell_1 \in \{U_{\LL_1}(N-1), \ldots, U_{\LL_1}(N)-1\}$ have also been
missed.  This implies that the ``peel'' of {\tt $\LL_1$(N,
  $\LL_2$(N))} must include all the above missed iterations. This peel
therefore is the program fragment shown in Fig.~\ref{fig:peell1l2}.

\begin{wrapfigure}[8]{r}{0.57\textwidth}
\vspace{-3em}
{\footnotesize
\begin{alltt}
for(\(\ell\sb{1}\)=0; \(\ell\sb{1}\)<\(U\sb{\LL\sb{1}}\)(N-1); \(\ell\sb{1}\)++) \{
  for(\(\ell\sb{2}\)=0; \(\ell\sb{2}\)<\(U\sb{\LL\sb{2}}\)(\(\ell\sb{1}\),N-1); \(\ell\sb{2}\)++)
    LPeel(\(\LL\sb{3}\), \(U\sb{\LL\sb{3}}\)(\(\ell\sb{1}\),\(\ell\sb{2}\),N-1), \(U\sb{\LL\sb{3}}\)(\(\ell\sb{1}\),\(\ell\sb{2}\),N))
  LPeel(\(\LL\sb{2}\), \(U\sb{\LL\sb{2}}\)(\(\ell\sb{1}\),N-1), \(U\sb{\LL\sb{2}}\)(\(\ell\sb{1}\),N))
\}
LPeel(\(\LL\sb{1}\), \(U\sb{\LL\sb{1}}\)(N-1), \(U\sb{\LL\sb{1}}\)(N))
\end{alltt}
}
\vspace{-1em}
\caption{Peel of {\tt $\LL_1$(N, $\LL_2$(N, $\LL_3$(N)))}}
\label{fig:peell1l2l3}
\end{wrapfigure}

Notice that if {\tt $U_{\LL_2}$($\ell_1$,N) -
  $U_{\LL_2}$($\ell_1$,N-1)} is a constant (as is the case if {\tt
  $U_{\LL_2}$($\ell_1$,N)} is any linear function of $\ell_1$ and
$N$), then the peel does not have any loop with nesting depth 2.
Hence, the maximum nesting depth of loops in the peel is strictly less
than that in {\tt $\LL_1$(N, $\LL_2$(N))}, yielding a peel that is
``simpler'' than the original program.
This argument can be easily generalized to loops with arbitrarily
large nesting depths.  The peel of {\tt $\LL_1$(N, $\LL_2$(N,
  $\LL_3$(N)))} is as shown in Fig.~\ref{fig:peell1l2l3}.

\begin{wrapfigure}[7]{r}{0.56\textwidth}
  \vspace{-2em}
  \begin{tabular}{ccc}
  \begin{minipage}{0.27\textwidth}
    {\footnotesize
\begin{alltt}
for(i=0; i<N; i++)
  for(j=0; j<N; j++)
    A[i][j] = N;
  
\end{alltt}
    }
  \vspace{-2em}
  \begin{center}(a)\end{center}
  \end{minipage}
  &&
  \begin{minipage}{0.28\textwidth}
    {\footnotesize
\begin{alltt}
for(i=0; i<N-1; i++)
  A[i][N-1] = N;
for(j=0; j<N; j++)
  A[N-1][j] = N;
\end{alltt}
    }
  \vspace{-2em}
  \begin{center}(b)\end{center}
  \end{minipage}
  \end{tabular}
  \caption{(a) Nested Loop \& (b) Peel}
  \label{fig:simpl-nested}
\end{wrapfigure}

As an illustrative example, let us consider the program in
Fig.~\ref{fig:simpl-nested}(a), and suppose we wish to compute the
peel of this program containing nested loops.  In this case, the upper
bounds of the loops are $U_{\LL_1}(N) = U_{\LL_2}(N) = N$.  The peel
is shown in Fig.~\ref{fig:simpl-nested}(b) and consists of two
sequentially composed non-nested loops.  The first loop takes into
account the missed iterations of the inner loop (a single iteration
in this example) that are executed in $\PP_N$ but are missed in
$\QQ_{N-1}$.  The second loop takes into account the missed iterations
of the outer loop in $\QQ_{N-1}$ compared to $\PP_N$.

\begin{algorithm}[!t]
  \caption{\footnotesize \textsc{GenQandPeel}({$\PP_N$}: program)}
  \label{alg:transform}
  \scriptsize
  \begin{algorithmic}[1]
    \State Let sequentially composed loops in $\PP_N$ be in the order $\LL_1$, $\LL_2$, $\ldots$, $\LL_m$;
    \For{each loop $\LL_i \in \textsc{TopLevelLoops}( \PP_N )$}\label{line:top-lvl}
      \State $\langle\QQ_{\LL_i}, \RR_{\LL_i}\rangle \leftarrow$ \textsc{GenQandPeelForLoop}($\LL_i$); \label{line:genlqr}
    \EndFor
    \While{$\exists v. {\it use(v)} \in \QQ_{\LL_i}$ $\wedge$ {\it def(v)} $\in \RR_{\LL_j}$, for some $1 \leq j < i \leq N$}\label{line:repairql1}\label{line:repairq} \Comment{$v$ is var/array element}
      \State Substitute rhs expression for $v$ from $\RR_{\LL_j}$ in $\QQ_{\LL_i}$; \label{line:repairql2} \Comment{If $\RR_{\LL_j}$ is a loop, abort}
    \EndWhile
    \State $\QQ_{N-1} \leftarrow \QQ_{\LL_1};\QQ_{\LL_2};\ldots;\QQ_{\LL_m}$; 
    \State $\peel(\PP_{N}) \leftarrow \RR_{\LL_1};\RR_{\LL_2};\ldots;\RR_{\LL_m}$;
    \State \Return $\langle\QQ_{N-1}, \peel(\PP_N)\rangle$;\label{line:returnqpeel}
    \vspace{2mm}
    \Procedure{\textsc{GenQandPeelForLoop}}{$\LL$: loop}
      \State Let $U_{\LL}(N)$ be the $\UB$ expression  of loop $\LL$;
      \State $\QQ_{\LL} \leftarrow \LL$ with $N-1$ substituted for $N$ in all $\UB$ expressions (including for nested loops);\label{line:qli}
      \If {$\LL$ has subloops}
        \State $t \leftarrow$ nesting depth of inner-most nested loop in $\LL$;
        \State $\RR_{t+1} \leftarrow $ empty program with no statements;
	\For{$k = t; k \geq 2; k$-\xspace-} \label{line:ri}
          \For{each subloop $SL_j$  in $\LL_i$ at nesting depth $k$}  \Comment{Ordered $SL_1$, $SL_2$, $\ldots$, $SL_j$}
            \State $\RR_{SL_j} \leftarrow$ {\tt LPeel}($SL_j$, $U_{SL_j}$($\ell_1, \ldots, \ell_{k-1}, N-1$), $U_{SL_j}$($\ell_1, \ldots, \ell_{k-1}, N$));
          \EndFor
          \State $\RR_k \leftarrow$ {\tt for (i=0; i$<$$U_{\LL_{k-1}}$($N-1$); i++) \{ $\RR_{k+1}$;$\RR_{SL_1}$;$\RR_{SL_2}$;\ldots;$\RR_{SL_j}$} \}; \label{line:rk}
        \EndFor
        \State $\RR_{\LL} \leftarrow$ {\tt $\RR_2$ ; LPeel($\LL$, $U_{\LL}$($N-1$), $U_{\LL}$($N$))}; \label{line:rlli}
      \Else
        \State $\RR_{\LL} \leftarrow$ {\tt LPeel($\LL$, $U_{\LL}$($N-1$), $U_{\LL}$($N$))}; \label{line:rnonest}
      \EndIf
      \State \Return $\langle\QQ_{\LL}, \RR_{\LL}\rangle$;
    \EndProcedure
  \end{algorithmic}
\end{algorithm}



Generalizing the above intuition, Algorithm~\ref{alg:transform}
presents function \textsc{GenQandPeel} for computing $\QQ_{N-1}$ and
$\peel(\PP_N)$ for a given $\PP_N$ that has sequentially composed
loops with potentially nested loops.  Due to the
grammar of our programs, our loops are well nested.  The method works
by traversing over the structure of loops in the program.  In this
algorithm $\QQ_{\LL_i}$ and $\RR_{\LL_i}$ represent the counterparts
of $\QQ_{N-1}$ and $\peel(\PP_N)$ for loop $\LL_i$.  We create the
program $\QQ_{N-1}$ by peeling each loop in the program and then
propagating these peels across subsequent loops.  We identify the
missed iterations of each loop in the program $\PP_N$ from the upper
bound expression $\UB$.  Recall that the upper bound of each loop
${\LL_k}$ at nesting depth $k$, denoted by $U_{\LL_k}$ is in terms of
the loop counters $\ell$ of outer loops and the program parameter $N$.
We need to peel $U_{\LL_k}(\ell_1, \ell_2, \ldots, \ell_{k-1}, N) -
U_{\LL_k}(\ell_1, \ell_2, \ldots, \ell_{k-1}, N-1)$ number of
iterations from each loop, where $\ell_1 \leq \ell_2 \leq \ldots \leq
\ell_{k-1}$ are counters of the outer nesting loops.  As discussed
above, whenever this difference is a constant value, we are guaranteed
that the loop nesting depth reduces by one.  It may so happen that
there are multiple sequentially composed loops $SL_{j}$ at nesting
depth $k$ and not just a single loop $\LL_k$.  At
line~\ref{line:top-lvl}, we iterate over top level loops and call
function \textsc{GenQandPeelForLoop}($\LL_i$) for each sequentially
composed loop $\LL_i$ in $\PP_N$.  At line~\ref{line:qli}, we
construct $\QQ_{\LL}$ for loop $\LL$.  If the loop $\LL$ has no nested
loops, then the peel is the last iterations computed using the upper
bound in line~\ref{line:rnonest}.  For nested loops, the loop at
line~\ref{line:ri} builds the peel for all loops inside $\LL$
following the above intuition.  The peels of all sub-loops are
collected and inserted in the peel of $\LL$ at line~\ref{line:rlli}.
Since all the peeled iterations are moved after $Q_{\LL}$ of each
loop, we need to repair expressions appearing in $Q_{\LL}$.  The
repairs are applied by the loop at line~\ref{line:repairq}.  In the
repair step, we identify the right hand side expressions for all the
variables and array elements assigned in the peeled iterations.
Subsequently, the uses of the variables and arrays in $\QQ_{\LL_i}$
that are assigned in $\RR_{\LL_j}$ are replaced with the assigned
expressions whenever $j<i$.  If $\RR_{\LL_j}$ is a loop, this step is
more involved and hence currently not considered.  Finally at
line~\ref{line:returnqpeel}, the peels and $Q$s of all top level loops
are stitched and returned.

Note that lines~\ref{line:repairql1} and~\ref{line:repairql2} of
Algorithm~\ref{alg:transform} implement the substitution represented
by the arrow in the second column of Fig.~\ref{fig:high-lvl}.  This is
necessary in order to move the peel of a loop to the end of the
program.  If either of the loops $\LL_i$ or $\LL_j$ use array elements
as index to other arrays then it can be difficult to identify what
expression to use in $\QQ_{\LL_i}$ for the substitution.  However,
such scenarios are observed less often, and hence, they hardly impact
the effectiveness of the technique on programs seen in practice.  The
peel $\RR_{\LL_j}$, from which the expression to be substituted in
$\QQ_{\LL_i}$ has to be taken, itself may have a loop. In such cases,
it can be significantly more challenging to identify what expression
to use in $\QQ_{\LL_i}$.  We use several optimizations to transform
the peeled loop before trying to identify such an expression.  If the
modified values in the peel can be summarized as closed form
expressions, then we can replace the loop in the peel with its
summary.  For example consider the peeled loop, {\tt for ($\ell_1$=0;
  $\ell_1$<N; $\ell_1$++) \{ S = S + 1; \}}. This loop is summarized
as {\tt S = S + N;} before it can be moved across subsequent code.  If
the variables modified in the peel of a nested loop are not used
later, then the peel can be trivially moved.  In many cases, the loop
in the peel can also be substituted with its conservative
over-approximation.  We have implemented some of these optimizations
in our tool and are able to verify several benchmarks with
sequentially composed nested loops.  It may not always be possible to
move the peel of a nested loop across subsequent loops but we have
observed that these optimizations suffice for many programs seen in
practice.

\begin{theorem}
  Let $\QQ_{N-1}$ and $\peel(\PP_N)$ be generated by application of
  function \textsc{GenQandPeel} from Algorithm~\ref{alg:transform} on
  program $\PP_N$.  Then $\PP_N$ is semantically equivalent to
  $\QQ_{N-1}; \peel(\PP_N)$.
\end{theorem}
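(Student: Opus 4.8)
The plan is to prove the statement by decomposing it into two claims and establishing each by induction. Write $\PP_N$ as a sequential composition $\LL_1;\LL_2;\cdots;\LL_m$ of its top-level loops (loop-free code between loops is carried through unchanged and, for readability, folded into the adjacent loop's pre/post fragment). It then suffices to establish: \textbf{(a)} for every (possibly nested) loop $\LL$, the pair $\langle\QQ_\LL,\RR_\LL\rangle$ returned by \textsc{GenQandPeelForLoop} satisfies $\LL \equiv \QQ_\LL;\RR_\LL$; and \textbf{(b)} after the repair loop of lines~\ref{line:repairq}--\ref{line:repairql2}, the program $\QQ_{\LL_1};\RR_{\LL_1};\cdots;\QQ_{\LL_m};\RR_{\LL_m}$ is semantically equivalent to $\QQ_{\LL_1};\cdots;\QQ_{\LL_m};\RR_{\LL_1};\cdots;\RR_{\LL_m}$, which by construction is $\QQ_{N-1};\peel(\PP_N)$. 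Chaining (a) over $\LL_1,\ldots,\LL_m$ with (b) yields $\PP_N \equiv \QQ_{N-1};\peel(\PP_N)$. Throughout I use two standing hypotheses: the SSA / array-SSA renaming assumption of Section~\ref{sec:prelims} (each scalar and each array element is written at a single syntactic location, loop-carried dependencies made explicit), and the natural well-formedness assumption that every loop upper-bound expression $U_\LL(\ell_1,\ldots,\ell_{k-1},N)$ is non-negative and non-decreasing in $N$ and in the outer counters, so that the index ranges below actually partition; for nested loops I also use the two data-flow simplifying assumptions stated before Fig.~\ref{fig:peel-nested}.

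\textbf{Claim (a), by induction on the nesting depth $d$ of $\LL$.} For $d=1$, the body of $\LL$ neither writes its counter $\ell$ nor otherwise affects the iteration count, so running $\LL$ amounts to running its body once for each $\ell \in \{0,\ldots,U_\LL(N)-1\}$, in order. Split this range at $U_\LL(N-1)$: the iterations $\ell \in \{0,\ldots,U_\LL(N-1)-1\}$ are exactly $\QQ_\LL$ (line~\ref{line:qli} replaces $N$ by $N-1$ only in upper bounds, leaving all other occurrences of $N$ in the body intact, which is precisely what running the original body for those iterations does), and the iterations $\ell \in \{U_\LL(N-1),\ldots,U_\LL(N)-1\}$ are exactly $\RR_\LL = \texttt{LPeel}(\LL,U_\LL(N-1),U_\LL(N))$ (line~\ref{line:rnonest}). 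For $d\ge 2$, let the body of $\LL$ be \texttt{B1}; $\mathcal{B}$; \texttt{B3}, where $\mathcal{B}$ is the nested structure of depth $d-1$. Expand $\LL$ iteration-by-iteration over $\ell_1 \in \{0,\ldots,U_\LL(N)-1\}$ and, inside each iteration, apply the induction hypothesis to $\mathcal{B}$, replacing it by $\QQ_{\mathcal{B}};\RR_{\mathcal{B}}$. Now assumption (i) (nothing written in the inner body is read later in \texttt{B1}, \texttt{B3} or another inner iteration) lets each $\RR_{\mathcal{B}}$ block be commuted rightward past the intervening \texttt{B3};\texttt{B1} code of this and later $\ell_1$-iterations, while assumption (ii) — equivalently the write-once property of SSA — guarantees that when an $\RR_{\mathcal{B}}$ block is finally executed it still reads the same \texttt{B1}-supplied values it would have read in place, so every such move preserves the state. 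Collecting the hoisted $\RR_{\mathcal{B}}$ blocks for $\ell_1 \in \{0,\ldots,U_{\LL}(N-1)-1\}$ produces exactly the loop $\RR_2$ of line~\ref{line:rk} (for $d>2$ the same hoisting is applied recursively, matching the descending-$k$ loop of lines~\ref{line:ri}--\ref{line:rk}), the residual iterations $\ell_1 \in \{U_{\LL}(N-1),\ldots,U_\LL(N)-1\}$ of the whole body form $\texttt{LPeel}(\LL,U_\LL(N-1),U_\LL(N))$, and together these are $\RR_\LL$ of line~\ref{line:rlli}; what remains is $\QQ_\LL$.

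\textbf{Claim (b), the reordering.} On any input on which \textsc{GenQandPeel} returns (does not \texttt{abort} at line~\ref{line:repairql2}), the while-loop terminates with its negated guard as invariant: no variable or array element defined in any $\RR_{\LL_j}$ is used in any $\QQ_{\LL_i}$ with $i>j$, every such use having been rewritten to the corresponding right-hand-side expression from $\RR_{\LL_j}$, which (possibly after transitive rewriting handled by further iterations of the loop) refers only to values computed no later than the end of $\QQ_{\LL_j}$ and hence already available at the point of $\QQ_{\LL_i}$ — so the rewriting is semantics-preserving. Given this invariant, move each $\RR_{\LL_j}$ rightward past $\QQ_{\LL_{j+1}},\ldots,\QQ_{\LL_m}$ by repeated adjacent swaps: a swap of $\RR_{\LL_j}$ with $\QQ_{\LL_i}$ ($i>j$) is sound because no def of $\RR_{\LL_j}$ is used in $\QQ_{\LL_i}$ (the invariant) and no def of $\QQ_{\LL_i}$ is used in $\RR_{\LL_j}$ (by SSA, $\QQ_{\LL_i}$ writes only fresh versions that did not exist at the original program point from which $\RR_{\LL_j}$'s reads are taken). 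A routine induction on the number of swaps transforms $\QQ_{\LL_1};\RR_{\LL_1};\cdots;\QQ_{\LL_m};\RR_{\LL_m}$ into $\QQ_{\LL_1};\cdots;\QQ_{\LL_m};\RR_{\LL_1};\cdots;\RR_{\LL_m}$.

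\textbf{Main obstacle.} The delicate part is the inductive step of Claim (a): showing that the "missed" inner-loop iterations of a nested loop can be collected and hoisted out to form the peel loop $\RR_k$ of line~\ref{line:rk} without changing the final state. This is exactly where the two simplifying assumptions are indispensable — one must argue simultaneously that deferring these iterations does not alter what any intervening \texttt{B1}/\texttt{B3} code observes (non-interference of the deferred writes, assumption (i)) and that the deferred iterations, once finally run, still consume the inputs they would have consumed in place (stability of \texttt{B1}-produced values, assumption (ii), or equivalently the SSA write-once property). Making the bookkeeping precise — which concrete iteration tuples $(\ell_1,\ell_2,\ldots)$ migrate where, and why the result coincides with the nested \texttt{LPeel} expressions emitted by lines~\ref{line:ri}--\ref{line:rlli} — is the bulk of the work; the remaining steps are routine range-splitting and commutation of independent code fragments.
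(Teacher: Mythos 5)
The paper states this theorem without proof (only the subsequent Lemma~\ref{lemma:nesting-depth} on nesting depth is proved in the text; the formal argument is deferred to the extended version), so there is no in-paper proof to compare against. Your argument is sound and faithfully formalizes exactly the construction the paper describes informally in Section~\ref{sec:genprog}: range-splitting of the iteration space at $U_\LL(N-1)$ for non-nested loops, hoisting of the missed inner-loop iterations into the $\RR_k$ loops under the two data-flow assumptions stated before Fig.~\ref{fig:peel-nested}, and the SSA-justified commutation of peels past subsequent $\QQ_{\LL_i}$ blocks after the repair step of lines~\ref{line:repairql1}--\ref{line:repairql2}. Your explicit flagging of the (implicit) monotonicity of upper-bound expressions in $N$, and your reading of assumption~(i) as also covering dependencies between inner-loop iterations, correctly identify hypotheses the theorem genuinely needs.
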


\begin{lemma}
  \label{lemma:nesting-depth}
  Suppose the following conditions hold;
  \begin{itemize}
    \item Program $\PP_N$ satisfies our syntactic restrictions (see
      Section~\ref{sec:prelims}).
    \item The upper bound expressions of all loops are linear
      expressions in $N$ and in the loop counters of outer nesting
      loops.
  \end{itemize}
  Then, the max nesting depth of loops in $\peel(\PP_N)$ is
  strictly less than that in $\PP_N$.
\end{lemma}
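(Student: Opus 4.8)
The plan is to argue structurally over the loop-nesting forest of $\PP_N$, using the preceding theorem to reduce the claim about $\peel(\PP_N)$ to a claim about the per-loop peels $\RR_{\LL_i}$ produced by \textsc{GenQandPeelForLoop}. Since $\peel(\PP_N)=\RR_{\LL_1};\ldots;\RR_{\LL_m}$ and sequential composition does not increase the maximum loop-nesting depth, it suffices to prove that for every loop $\LL$ of $\PP_N$ we have $\mathrm{depth}(\RR_\LL)\le \mathrm{depth}(\LL)-1$; taking the maximum over the top-level loops $\LL_1,\ldots,\LL_m$ then yields $\mathrm{depth}(\peel(\PP_N))\le \mathrm{depth}(\PP_N)-1$, which is strictly less whenever $\PP_N$ contains at least one loop (if it has none, $\peel(\PP_N)$ is empty and there is nothing to show). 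Note also that the repair loop on lines~\ref{line:repairql1}--\ref{line:repairql2} only rewrites right-hand sides inside $\QQ_{N-1}$ and leaves the syntactic shape of each $\RR_{\LL_i}$, and hence of $\peel(\PP_N)$, untouched, so it can be ignored for this argument.

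The key lemma to establish first is: for any loop $\LL$ whose upper bound $U_\LL(\ell_1,\ldots,\ell_{k-1},N)$ is linear in $N$ and in the enclosing loop counters, the fragment $\texttt{LPeel}(\LL,\,U_\LL(\vec\ell,N-1),\,U_\LL(\vec\ell,N))$ has nesting depth at most $\mathrm{depth}(\LL)-1$. This is where the linearity hypothesis does the work: $U_\LL(\vec\ell,N)-U_\LL(\vec\ell,N-1)$ equals the coefficient of $N$ in $U_\LL$, a constant independent of $\vec\ell$ and $N$. Hence, by the definition of $\texttt{LPeel}$, the fragment is a concatenation of a fixed number of copies of the body of $\LL$, each with the loop counter instantiated to a concrete value, and such a concatenation has the same nesting depth as the body of $\LL$ (or $0$ if the constant is non-positive); since the body has depth $\mathrm{depth}(\LL)-1$, we are done. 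Without linearity this difference could depend on $\vec\ell$, forcing $\texttt{LPeel}$ to remain a genuine loop and destroying the depth reduction.

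With this in hand I would induct on the nesting structure of $\LL$. In the base case $\LL$ has no nested loop, so \textsc{GenQandPeelForLoop} returns $\RR_\LL=\texttt{LPeel}(\LL,U_\LL(N-1),U_\LL(N))$, which by the key lemma is loop-free, i.e. $\mathrm{depth}(\RR_\LL)=0=\mathrm{depth}(\LL)-1$. In the inductive case, let $d=\mathrm{depth}(\LL)\ge 2$ and recall $\RR_\LL=\RR_2;\texttt{LPeel}(\LL,U_\LL(N-1),U_\LL(N))$; the second part has depth $d-1$ by the key lemma. For the first part I would prove, by downward induction on $k$ from the innermost level $t$ down to $2$, the bound $\mathrm{depth}(\RR_k)\le d-k+1$: at level $k$, line~\ref{line:rk} makes $\RR_k$ a single loop wrapping $\RR_{k+1}$ together with the fragments $\RR_{SL_j}=\texttt{LPeel}(SL_j,\ldots)$ for the depth-$k$ subloops $SL_j$; by the key lemma each $\RR_{SL_j}$ has depth at most $\mathrm{depth}(SL_j)-1\le d-k$, and by the inner induction hypothesis $\mathrm{depth}(\RR_{k+1})\le d-k$ (with $\RR_{t+1}$ empty as the base), so $\mathrm{depth}(\RR_k)\le 1+(d-k)=d-k+1$. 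Instantiating at $k=2$ gives $\mathrm{depth}(\RR_2)\le d-1$, whence $\mathrm{depth}(\RR_\LL)\le d-1=\mathrm{depth}(\LL)-1$, completing the induction. Combining over $\LL_1,\ldots,\LL_m$ finishes the proof.

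The main obstacle is bookkeeping rather than mathematics: fixing a single consistent notion of nesting depth that matches the way Algorithm~\ref{alg:transform} indexes levels (absolute versus relative to $\LL$), and checking that line~\ref{line:rk} inserts exactly one loop level per recursion level so the depth accounting stays tight. Once the key lemma isolating the role of the linearity assumption is stated cleanly, the remaining double induction is routine.
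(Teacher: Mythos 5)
Your proposal is correct and follows essentially the same route as the paper's proof: the same key observation that linearity of $U_{\LL}$ makes $U_{\LL}(\vec\ell,N)-U_{\LL}(\vec\ell,N-1)$ a constant, so that {\tt LPeel} becomes a concatenation of finitely many loop-body copies with nesting depth one less than the loop, followed by depth accounting over the structure built in \textsc{GenQandPeelForLoop}. The only difference is presentational: you package the accounting as an explicit double induction (and handle the repair loop and loop-free edge cases), whereas the paper does the same bookkeeping by direct arithmetic on the depth of a peel placed at level $k$.
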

\begin{proof}
Let $U_{\LL_k}(\ell_1, \ldots ,\ell_{k-1}, N)$ be the upper bound
expression of a loop $\LL_k$ at nesting depth $k$.  Suppose $U_{\LL_k}
= c_1.\ell_1 + \cdots c_{k-1}.\ell_{k-1} + C. N + D$, where $c_1,
\ldots c_{k-1}, C$ and $D$ are constants.  Then $U_{\LL_k}(\ell_1,
\ldots, \ell_{k-1}, N)$ $-$ $U_{\LL_k}(\ell_1, \ldots \ell_{k-1}, N-1)
= C$, i.e. a constant.  Now, recalling the discussion in
Section~\ref{sec:genprog}, we see that {\tt LPeel($\LL_k$,
  $U_k$($\ell_1, \ldots, \ell_{k-1}, N-1$), $U_k$($\ell_1, \ldots,
  \ell_{k-1}, N$))} simply results in concatenating a constant number
of peels of the loop $\LL_k$.  Hence, the maximum nesting depth of
loops in {\tt LPeel($\LL_k$, $U_k$($\ell_1, \ldots, \ell_{k-1}, N-1$),
  $U_k$($\ell_1, \ldots, \ell_{k-1}, N$))} is strictly less than the
maximum nesting depth of loops in $\LL_k$.

Suppose loop $\LL$ with nested loops (having maximum nesting depth
$t$) is passed as the argument of function \textsc{GenQandPeelForLoop}
(see Algorithm~\ref{alg:transform}).  In line~\ref{line:ri} of
function \textsc{GenQandPeelForLoop}, we iterate over all loops at
nesting depth $2$ and above within $\LL$.  Let $\LL_k$ be a loop at
nesting depth $k$, where $2 \le k \le t$.  Clearly, $\LL_k$ can have
at most $t-k$ nested levels of loops within it.  Therefore, when {\tt
  LPeel} is invoked on such a loop, the maximum nesting depth of loops
in the peel generated for $\LL_k$ can be at most $t-k-1$.  From
lines~\ref{line:rk} and~\ref{line:rlli} of function
\textsc{GenQandPeelForLoop}, we also know that this {\tt LPeel} can
itself appear at nesting depth $k$ of the overall peel $\RR_{\LL}$.
Hence, the maximum nesting depth of loops in $\RR_{\LL}$ can be
$t-k-1+k$, i.e. $t-1$.  This is strictly less than the maximum nesting
depth of loops in $\LL$.\qed
\end{proof}

\begin{corollary}
  \label{corr:loop-free}
  If $\PP_N$ has no nested loops, then $\peel(\PP_N)$ is loop-free.
\end{corollary}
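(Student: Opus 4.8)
The plan is to obtain the corollary as an immediate instance of Lemma~\ref{lemma:nesting-depth}. If $\PP_N$ has no nested loops, then every loop occurring in $\PP_N$ sits at nesting depth $1$, so the maximum nesting depth of loops in $\PP_N$ is at most $1$: it equals $1$ when $\PP_N$ contains at least one loop, and $0$ when it contains none. In the degenerate case where $\PP_N$ is loop-free, I would observe that the \texttt{for} loop over top-level loops in Algorithm~\ref{alg:transform} has no iterations, so $\peel(\PP_N)$ is the empty program and the claim is trivial. In the remaining case, I would invoke Lemma~\ref{lemma:nesting-depth}, which gives that the maximum loop nesting depth of $\peel(\PP_N)$ is strictly less than that of $\PP_N$, i.e.\ strictly less than $1$, hence $0$; a program of maximum loop nesting depth $0$ contains no loops, which is exactly what is claimed.

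As a sanity check, I would also read the conclusion directly off the code of \textsc{GenQandPeel}. When no loop in $\PP_N$ has a subloop, each call \textsc{GenQandPeelForLoop}($\LL_i$) takes the \texttt{else} branch and sets $\RR_{\LL_i} = {\tt LPeel}(\LL_i, U_{\LL_i}(N-1), U_{\LL_i}(N))$. Since $\LL_i$ is a top-level loop, $U_{\LL_i}$ involves only $N$ and constants, and under the linear-bound hypothesis inherited from Lemma~\ref{lemma:nesting-depth} the difference $U_{\LL_i}(N) - U_{\LL_i}(N-1)$ is a non-negative constant; hence ${\tt LPeel}(\LL_i, \ldots)$ is a sequential composition of constantly many copies of the body of $\LL_i$, which is itself loop-free since the body of $\LL_i$ contains no loop. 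The repair loop at lines~\ref{line:repairq}--\ref{line:repairql2} only substitutes right-hand-side expressions into the $\QQ_{\LL_i}$ fragments and never creates a loop, so $\peel(\PP_N) = \RR_{\LL_1};\ldots;\RR_{\LL_m}$ is a sequential composition of loop-free fragments, hence loop-free.

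I do not expect any real obstacle here: all the work is done by Lemma~\ref{lemma:nesting-depth}, and the corollary is merely its specialization to nesting depth $1$. The one point worth flagging explicitly is that the statement is understood to carry over that lemma's hypotheses --- in particular linearity of loop-bound expressions; without it a non-nested loop with bound $N*N$ would peel to a loop iterating $2N-1$ times, so the difference $U_{\LL_i}(N) - U_{\LL_i}(N-1)$ genuinely needs to be constant for the ``loop-free'' conclusion to hold.
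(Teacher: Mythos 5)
Your proof is correct and matches the paper's (implicit) argument: the corollary is stated without a separate proof precisely because it is the instantiation of Lemma~\ref{lemma:nesting-depth} at maximum nesting depth $1$, which is exactly your first paragraph. Your closing caveat --- that the corollary inherits the lemma's linearity hypothesis on loop bounds, without which a bound like $N*N$ would yield a peel containing a loop --- is an accurate and worthwhile observation, not a gap.
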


\subsection{Generating $\varphi'(N-1)$ and $\Delta \varphi'(N)$}
\label{subsec:genvarphi}
Given $\varphi(N)$, we check if it is of the form
$\bigwedge_{i=0}^{N-1} \rho_i$ (or $\bigvee_{i=0}^{N-1} \rho_i$),
where $\rho_i$ is a formula on the $i^{th}$ elements of one or more
arrays, and scalars used in $\PP_N$.  If so, we infer $\varphi'(N-1)$
to be $\bigwedge_{i=0}^{N-2} \rho_i$ (or $\bigvee_{i=0}^{N-2} \rho_i$)
and $\Delta \varphi'(N)$ to be $\rho_{N-1}$ (assuming variables/array
elements in $\rho_{N-1}$ are not modified by $\QQ_{N-1}$).  Note that
all uses of $N$ in $\rho_i$ are retained as is (i.e. not changed to
$N-1$) in $\varphi'(N-1)$.  In general, when deriving $\varphi'(N-1)$,
we do not replace any use of $N$ in $\varphi(N)$ by $N-1$ unless it is
the limit of an iterated conjunct/disjunct as discussed above.
Specifically, if $\varphi(N)$ doesn't contain an iterated
conjunct/disjunct as above, then we consider $\varphi'(N-1)$ to be the
same as $\varphi(N)$ and $\Delta \varphi'(N)$ to be True.  Thus, our
generation of $\varphi'(N-1)$ and $\Delta \varphi'(N)$ differs from
that of~\cite{tacas20}.  As discussed earlier, this makes it possible
to reason about a much larger class of pre-conditions than that
admissible by the technique of~\cite{tacas20}.

\subsection{Inferring Inductive Difference Invariants}
\label{subsec:diffinvs}

Once we have $\PP_{N-1}$, $\QQ_{N-1}$, $\varphi(N-1)$ and
$\varphi'(N-1)$, we infer \emph{difference invariants}.  We construct
the standard cross-product of programs $\QQ_{N-1}$ and $\PP_{N-1}$,
denoted as $\QQ_{N-1} \times \PP_{N-1}$, and infer difference
invariants at key control points.
Note that $\PP_{N-1}$ and $\QQ_{N-1}$ are guaranteed to have
synchronized iterations of corresponding loops (both are obtained by
restricting the upper bounds of all loops to use $N-1$ instead of
$N$).  However, the conditional statements within the loop body may
not be synchronized.  Thus, whenever we can infer that the
corresponding conditions are equivalent, we synchronize the branches
of the conditional statement.  Otherwise, we consider all four
possibilities of the branch conditions.
It can be seen that the net effect of the cross-product is executing
the programs $\PP_{N-1}$ and $\QQ_{N-1}$ one after the other.

We run a dataflow analysis pass over the constructed product graph to
infer difference invariants at loop head, loop exit and at each branch
condition. The only dataflow values of interest are differences
between corresponding variables in $\QQ_{N-1}$ and $\PP_{N-1}$.
Indeed, since structure and variables of $\QQ_{N-1}$ and $\PP_{N-1}$
are similar, we can create the correspondence map between the
variables.  We start the difference invariant generation by
considering relations between corresponding variables/array elements
appearing in pre-conditions of the two programs.  We apply static
analysis that can track equality expressions (including disjunctions
over equality expressions) over variables as we traverse the program.
These equality expressions are our difference invariants.

We observed in our experiments the most of the inferred equality
expressions are simple expressions of $N$ (atmost quadratic in
$N$). This not totally surprising and similar observations have also
been independently made in~\cite{gupta20,churchill19,barthe11}.  Note
that the difference invariants may not always be equalities.  We can
easily extend our analysis to learn inequalities using interval
domains in static analysis.  We can also use a library of expressions
to infer difference invariants using a guess-and-check framework.
Moreover, guessing difference invariants can be easy as in many cases
the difference expressions may be independent of the program
constructs, for example, the equality expression $\mathtt{v = v'}$
where $\mathtt{v} \in \PP_{N-1}$ and $\mathtt{v'} \in \QQ_{N-1}$ does
not depend on any other variable from the two programs.

For the example in Fig.~\ref{fig:high-lvl}, the difference invariant
at the head of the first loop of $Q_{N-1} \times P_{N-1}$ is
$D(V_{\QQ}, V_{\PP}, N-1) \equiv \mathtt{(x'-x = i \times (2 \times
  N-1)}$ $\wedge$ $\mathtt{\forall i \in [0,N-1),}$
  $\mathtt{a'[i]-a[i] = 1)}$, where $\mathtt{x,a} \in V_{\PP}$ and
  $\mathtt{x',a'} \in V_{\QQ}$.  Given this, we easily get
  $\mathtt{x'-x= (N-1) \times (2 \times N-1)}$ when the first loop
  terminates.  For the second loop, $D(V_{\QQ}, V_{\PP}, N-1) \equiv
  (\mathtt{\forall j \in [0,N-1), \; b'[j]-b[j] = (x'-x) + N^2 =
      (N-1) \times}$ $\mathtt{(2 \times N-1)+N^2})$.

Note that the difference invariants and its computation are agnostic
of the given post-condition.  Hence, our technique does not need to
re-run this analysis for proving a different post-condition for the
same program.

\subsection{Verification using Inductive Difference Invariants}

We present our method \textsc{Diffy} for verification of programs
using inductive difference invariants in Algorithm~\ref{alg:dpverify}.
It takes a Hoare triple $\{\varphi(N)\} \;\PP_N\; \{\psi(N)\}$ as
input, where $\varphi(N)$ and $\psi(N)$ are pre- and post-condition
formulas.  We check the base in line~\ref{line:diffy:base} to verify
the Hoare triple for $N=1$.
If this check fails, we report a counterexample.  Subsequently, we
compute $\QQ_{N-1}$ and $ \peel(\PP_N)$ as described in
section~\ref{sec:genprog} using the function \textsc{GenQandPeel} from
Algorithm~\ref{alg:transform}.  At line~\ref{line:diffy:formdiff}, we
compute the formulas $\varphi'(N-1)$ and $\Delta \varphi'(N)$ as
described in section~\ref{subsec:genvarphi}.  For automation, we
analyze the quantifiers appearing in $\varphi(N)$ and modify the
quantifier ranges such that the conditions in
section~\ref{subsec:genvarphi} hold.  We infer difference invariants
$D(V_{\QQ}, V_{\PP}, N-1)$ on line~\ref{line:diffy:dinvs} using the
method described in section~\ref{subsec:diffinvs}, wherein $V_{\QQ}$
and $V_{\PP}$ are sets of variables from $\QQ_{N-1}$ and $\PP_{N-1}$
respectively.  At line~\ref{line:diffy:psip}, we compute $\psi'(N-1)$
by eliminating variables $V_{\PP}$ from $\PP_{N-1}$ from $\psi(N-1)
\land D(V_{\QQ}, V_{\PP}, N-1)$.  At
line~\ref{line:diffy:induct-step}, we check the inductive step of our
analysis.  If the inductive step succeeds, then we conclude that the
assertion holds.  If that is not the case then, we try to iteratively
strengthen both the pre- and post-condition of $\peel(\PP_N)$
simultaneously by invoking \textsc{Strengthen}.


\begin{algorithm}[!t]
  \caption{\footnotesize \textsc{Diffy}( \{$\varphi(N)$\} $\PP_N$ \{$\psi(N)$\} )}
  \label{alg:dpverify}
  \scriptsize
  \begin{algorithmic}[1]
    \If{\;\{$\varphi(1)$\} $\PP_1$ \{$\psi(1)$\}\; fails}\label{line:diffy:base} \Comment{Base case for N=1}
      \State {\bf print} ``Counterexample found!''; \label{line:counterex}
      \State \Return $\false$;
    \EndIf

    \vspace{2mm}
    \State $\langle\QQ_{N-1}, \peel(\PP_N)\rangle \leftarrow$ \textsc{GenQandPeel}($\PP_N$);
    \State $\langle \varphi'(N-1), \Delta \varphi'(N)\rangle \leftarrow$ \textsc{FormulaDiff}($\varphi(N)$);\label{line:diffy:formdiff} \Comment{$\varphi(N) \Rightarrow \varphi'(N-1) \wedge \Delta \varphi'(N)$}
    \State $D(V_{\QQ}, V_{\PP}, N-1) \leftarrow$ \textsc{InferDiffInvs}$(\QQ_{N-1}, \PP_{N-1}, \varphi'(N-1), \varphi(N-1))$;\label{line:diffy:dinvs}
    \State $\psi'(N-1) \leftarrow$ \textsc{QE}$(V_{\PP}, \psi(N-1) \wedge D(V_{\QQ}, V_{\PP}, N-1))$;\label{line:diffy:psip}
    \If{\;\{$\psi'(N-1) \wedge \Delta \varphi'(N)$\} $\peel(\PP_N)$ \{$\psi(N)$\}\;} \label{line:diffy:induct-step}
      \State \Return $\true$;\label{line:loop-diffy-return-true} \Comment{Verification Successful}
    \Else
      \State \Return \textsc{Strengthen}$(\PP_N, \peel(\PP_N), \varphi(N), \psi(N), \psi'(N-1), \Delta \varphi'(N), D(V_{\QQ}, V_{\PP}, N))$; \label{line:diffy:strengthen}
    \EndIf

    \vspace{2mm}
    \Procedure{\textsc{Strengthen}}{$\PP_N$, $\peel(\PP_N)$, $\varphi(N)$, $\psi(N)$, $\psi'(N-1)$, $\Delta \varphi'(N)$, $D(V_{\QQ}, V_{\PP}, N)$}
    \State $\chi(N) \leftarrow \psi(N)$;\label{line:diffy:init-chi}
    \State $\xi(N) \leftarrow \true$;\label{line:diffy:init-psi}
    \State $\xi'(N-1) \leftarrow \true$;\label{line:diffy:init-psip}
    \Repeat
      \State $\chi'(N-1) \leftarrow \textsc{WP}(\chi(N), \peel(\PP_N))$;\label{line:diffy:wp}     \Comment{Dijkstra's $\mathsf{WP}$ for loop free code}
      \If{ $\chi'(N-1) = \emptyset$ }
        \If{ $\peel(\PP_N)$ has a loop }
          \State \Return \textsc{Diffy}(\{$\xi'(N-1) \wedge \Delta \varphi'(N) \wedge \psi'(N-1)$\} $\peel(\PP_N)$ \{$\xi(N) \wedge \psi(N)$\}); \label{line:loop-diffy-recursive}
        \Else
          \State \Return $\false$;\label{line:diffy:abandon1}     \Comment{Unable to prove}
        \EndIf
      \EndIf
      \State $\chi(N) \leftarrow$ \textsc{QE}$(V_{\QQ}, \chi'(N) \wedge D(V_{\QQ}, V_{\PP}, N))$;  \label{line:diffy:qe}
      \State $\xi(N) \leftarrow \xi(N) \wedge \chi(N)$;      \label{line:diffy:accumulate1}
      \State $\xi'(N-1) \leftarrow \xi'(N-1) \wedge \chi'(N-1)$;   \label{line:diffy:accumulate2}
      \If{\;\{$\varphi(1)$\} $\PP_1$ \{$\xi(1)$\}\; fails\;}  \label{line:diffy:base2}
        \State \Return $\false$;\label{line:diffy:abandon2}     \Comment{Unable to prove}
      \EndIf
      \If{\;\{$\xi'(N-1) \wedge \Delta \varphi'(N) \wedge \psi'(N-1)$\} $\peel(\PP_N)$ \{$\xi(N) \wedge \psi(N)$\}\; holds}  \label{line:diffy:induct-step2}
        \State \Return $\true$;                   \Comment{Verification Successful}
      \EndIf
    \Until{timeout};
    \State \Return $\false$;
    \EndProcedure
  \end{algorithmic}
\end{algorithm}



The function \textsc{Strengthen} first initializes the formula
$\chi(N)$ with $\psi(N)$ and the formulas $\xi(N)$ and $\xi'(N-1)$ to
$\true$.  To strengthen the pre-condition of $\peel(\PP_N)$, we infer a
formula $\chi'(N-1)$ using Dijkstra's weakest pre-condition
computation of $\chi(N)$ over the $\peel(\PP_N)$ in
line~\ref{line:diffy:wp}.  It may happen that we are unable to infer
such a formula.  In such a case, if the program $\peel(\PP_N)$ has
loops then we recursively invoke \textsc{Diffy} at line
\ref{line:loop-diffy-recursive} to further simplify the program.
Otherwise, we abandon the verification effort
(line~\ref{line:diffy:abandon1}).  We use quantifier elimination to
infer $\chi(N-1)$ from $\chi'(N-1)$ and $D(V_{\QQ},V_{\PP}, N-1))$ at
line~\ref{line:diffy:psip}.

The inferred pre-conditions $\chi(N)$ and $\chi'(N-1)$ are accumulated
in $\xi(N)$ and $\xi'(N-1)$, which strengthen the post-conditions of
$\PP_{N}$ and $\QQ_{N-1}$ respectively in
lines~\ref{line:diffy:accumulate1} - \ref{line:diffy:accumulate2}.  We
again check the base case for the inferred formulas in $\xi(N)$ at
line~\ref{line:diffy:base2}.  If the check fails we abandon the
verification attempt at line~\ref{line:diffy:abandon2}.  If the base
case succeeds, we then proceed to the inductive step.  When the
inductive step succeeds, we conclude that the assertion is verified.
Otherwise, we continue in the loop and try to infer more
pre-conditions untill we run out of time.

The pre-condition in Fig.~\ref{fig:high-lvl} is $\phi(N) \equiv \true$
and the post-condition is $\psi(N) \equiv \mathtt{\forall j \in [0,N),
    \; b[j] = j+N^3)}$.  At line~\ref{line:diffy:formdiff},
  $\phi'(N-1)$ and $\Delta \phi'(N-1)$ are computed to be $\true$.
  $D(V_{\QQ}, V_{\PP}, N-1)$ is the formula computed in
  Section~\ref{subsec:diffinvs}.  At line~\ref{line:diffy:psip},
  $\psi'(N-1) \equiv (\mathtt{\forall j \in [0,N-1), \; b'[j] = j +
      (N-1)^3 + (N-1) \times (2 \times N-1) + N^2 =}$ $\mathtt{j +
      N^3})$.  The algortihm then invokes \textsc{Strengthen} at
    line~\ref{line:diffy:strengthen} which infers the formulas
    $\chi'(N-1) \equiv \mathtt{(x'=(N-1)^3)}$ at
    line~\ref{line:diffy:wp} and $\chi(N) \equiv \mathtt{(x=N^3)}$ at
    line~\ref{line:diffy:qe}. These are accumulated in $\xi'(N-1)$ and
    $\xi(N)$, simultaneosuly strengthening the pre- and
    post-condition. Verification succeeds after this strengthening
    iteration.

The following theorem guarantees the soundness of our technique.

\begin{theorem}
  \label{thm:sound}
  Suppose there exist formulas $\xi'(N)$ and $\xi(N)$ and an integer
  $M > 0$ such that the following hold
    \begin{itemize}
    \item $\{\varphi(N)\}\; \PP_N\;\{\psi(N)\wedge \xi(N)\}$ holds for
      $1 \le N \le M$, for some $M > 0$.
    \item $\xi(N) \wedge D(V_\QQ, V_\PP, N) \Rightarrow \xi'(N)$ for
      all $N > 0$.
    \item $\{\xi'(N-1) \wedge \Delta \varphi'(N) ~\wedge~ \psi'(N-1)
      \} \; \peel(\PP_N) \; \{\xi(N) \wedge \psi(N)\}$ holds for all
      $N \ge M$, where $\psi'(N-1) \equiv \exists V_\PP \big(\psi(N-1)
      \wedge D(V_\QQ, V_\PP, N-1)\big)$.
    \end{itemize}
  Then $\{\varphi(N)\} \; \PP_N \; \{\psi(N)\}$ holds for all $N > 0$.
\end{theorem}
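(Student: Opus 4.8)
The plan is to prove this by mathematical induction on $N$, using the three hypotheses exactly as building blocks, in a way that mirrors the informal development in Section~\ref{sec:overview}.

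\textbf{Base case.} For $1 \le N \le M$ the first hypothesis directly gives $\{\varphi(N)\}\;\PP_N\;\{\psi(N)\wedge\xi(N)\}$, which in particular implies $\{\varphi(N)\}\;\PP_N\;\{\psi(N)\}$. So the claim holds for these values of $N$.

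\textbf{Inductive step.} Fix $N \ge M+1$ and assume as inductive hypothesis that $\{\varphi(N-1)\}\;\PP_{N-1}\;\{\psi(N-1)\wedge\xi(N-1)\}$ holds. We must derive $\{\varphi(N)\}\;\PP_N\;\{\psi(N)\wedge\xi(N)\}$; the post-condition $\psi(N)$ alone then follows, and strengthening the carried invariant to include $\xi(N)$ is precisely what makes the induction close. Start from a state satisfying $\varphi(N)$. By the construction of Section~\ref{subsec:genvarphi} we have $\varphi(N)\Rightarrow\varphi'(N-1)\wedge\Delta\varphi'(N)$, and by Theorem~1 (semantic equivalence) $\PP_N$ is $\QQ_{N-1};\peel(\PP_N)$. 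The crux is to track what holds after $\QQ_{N-1}$ executes. Here I would run $\QQ_{N-1}$ and $\PP_{N-1}$ jointly from a common state satisfying $\varphi(N)$ (hence $\varphi'(N-1)$ for the $\QQ$-copy and $\varphi(N-1)$ for the $\PP$-copy, after the disjoint renaming $V_\QQ,V_\PP$): the difference invariant $D(V_\QQ,V_\PP,N-1)$ relates the two copies at the end of their executions, and the inductive hypothesis tells us the $\PP_{N-1}$-copy ends in a state satisfying $\psi(N-1)\wedge\xi(N-1)$. Combining, after $\QQ_{N-1}$ the $V_\QQ$-state satisfies $\exists V_\PP\big((\psi(N-1)\wedge\xi(N-1))\wedge D(V_\QQ,V_\PP,N-1)\big)$, which contains $\psi'(N-1)$ as a conjunct and, using the second hypothesis $\xi(N-1)\wedge D(V_\QQ,V_\PP,N-1)\Rightarrow\xi'(N-1)$, also entails $\xi'(N-1)$. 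Since the variables/array elements of $\Delta\varphi'(N)$ are (by the setup in Section~\ref{sec:overview}) not modified by $\QQ_{N-1}$, $\Delta\varphi'(N)$ still holds. So the state reached after $\QQ_{N-1}$ satisfies $\xi'(N-1)\wedge\Delta\varphi'(N)\wedge\psi'(N-1)$. Now apply the third hypothesis to $\peel(\PP_N)$: the resulting state satisfies $\xi(N)\wedge\psi(N)$. By the semantic equivalence this is exactly the post-state of $\PP_N$, completing the inductive step. The principle of induction then yields $\{\varphi(N)\}\;\PP_N\;\{\psi(N)\}$ for all $N>0$.

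\textbf{Anticipated main obstacle.} The delicate point is the "gluing" argument that justifies transferring the inductive hypothesis through the difference invariant: one must argue carefully that running $\QQ_{N-1}$ from the $\varphi(N)$-state produces a $V_\QQ$-state for which there \emph{exists} a companion $V_\PP$-state reachable by $\PP_{N-1}$ from a $\varphi(N-1)$-state and related by $D$. This relies on (i) the soundness/validity of $D(V_\QQ,V_\PP,N-1)$ as an actual relational invariant of the cross-product $\QQ_{N-1}\times\PP_{N-1}$ under preconditions $\varphi'(N-1)$ and $\varphi(N-1)$ (established in Section~\ref{subsec:diffinvs}), and (ii) the fact that $\varphi(N)$ implies \emph{both} $\varphi'(N-1)$ and $\varphi(N-1)$ on the shared initial valuation --- the latter needs the standard observation that $\varphi'(N-1)$ and $\varphi(N-1)$ are satisfied by the same concrete state obtained by restriction, which is where the care about "not replacing uses of $N$ except in iterated conjunct limits" pays off. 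I would also need to state precisely that $D$ is preserved as a postcondition of the cross product (not just an intermediate invariant), and that quantifier elimination \textsc{QE} in the definition of $\psi'(N-1)$ is sound, i.e. logically equivalent to the existential, so nothing is lost. Everything else is routine propagation of Hoare-triple reasoning along the sequential composition $\QQ_{N-1};\peel(\PP_N)$.
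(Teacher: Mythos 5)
Your proof is correct and follows essentially the argument the paper itself sketches in Section~\ref{sec:overview} (the paper prints no formal proof of this theorem): strong induction on $N$ with the post-condition strengthened to $\psi(N)\wedge\xi(N)$, the decomposition $\PP_N \equiv \QQ_{N-1};\peel(\PP_N)$ from Theorem~1, the difference invariant $D(V_\QQ,V_\PP,N-1)$ used to transfer the inductive hypothesis from the final state of $\PP_{N-1}$ to the intermediate state after $\QQ_{N-1}$, and the second hypothesis to convert $\xi(N-1)$ into $\xi'(N-1)$ before applying the third hypothesis to $\peel(\PP_N)$. The ``gluing'' subtlety you flag --- that every $\varphi(N)$-state must admit a $D$-related companion run of $\PP_{N-1}$ from some $\varphi(N-1)$-state, and that $\Delta\varphi'(N)$ survives $\QQ_{N-1}$ --- is exactly the set of implicit premises the theorem inherits from the surrounding construction, so identifying rather than glossing over them is the right call.
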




\section{Experimental Evaluation}
\label{sec:experiments}
We have instantiated our technique in a prototype tool called
{\ourtool}.  It is written in {\tt C++} and is built using the
LLVM(v$6.0.0$)~\cite{clang} compiler.  We use the SMT solver
{\zthree}(v$4.8.7$)~\cite{z3} for proving Hoare triples of loop-free
programs.  {\ourtool} and the supporting data to replicate the
experiments are openly available at \cite{diffy-artifact}.

\begin{table*}[!t]
\centering
\footnotesize
\begin{tabular}{|c|c|c|c|c|c|c|c|c|c|c|c|}
\hline
\multicolumn{2}{|c|}{\textsc{Program}} & \multicolumn{3}{c|}{{\ourtool}} & \multicolumn{2}{c|}{{\vajra}} & \multicolumn{2}{c|}{{\veriabs}} & \multicolumn{3}{c|}{{\viap}}\\
\multicolumn{2}{|c|}{\textsc{Category}} &
      ~~S~~ & ~~U~~ & ~TO~ & ~~S~~ & ~~U~~ & ~~S~~ & ~TO~ & ~~S~~ & ~~U~~ & ~TO~
\\ \hline \hline
Safe C1   & 110 & 110 & 0 & 0 & 110 & 0  & 96  & 14 & 16 & 1  & 93  \\ \hline
Safe C2   & 24  & 21  & 0 & 3 & 0   & 24 & 5   & 19 & 4  & 0  & 20  \\ \hline
Safe C3   & 23  & 20  & 3 & 0 & 0   & 23 & 9   & 14 & 0  & 23 & 0   \\ \hline
Total     & 157 & 151 & 3 & 3 & 110 & 47 & 110 & 47 & 20 & 24 & 113 \\ \hline \hline

Unsafe C1 & 99  & 98  & 1 & 0 & 98  & 1  & 84  & 15 & 98  & 0  & 1  \\ \hline
Unsafe C2 & 24  & 24  & 0 & 0 & 17  & 7  & 19  & 5  & 22  & 0  & 2  \\ \hline
Unsafe C3 & 23  & 20  & 3 & 0 & 0   & 23 & 22  & 1  & 0   & 23 & 0  \\ \hline
Total     & 146 & 142 & 4 & 0 & 115 & 31 & 125 & 21 & 120 & 23 & 3  \\ \hline
\end{tabular}
\caption{Summary of the experimental results. S is successful result. U is inconclusive result. TO is timeout.}
\label{tab:exp-results}
\end{table*}



{\bf Setup.}  All experiments were performed on a machine with Intel
i7-6500U CPU, 16GB RAM, running at 2.5 GHz, and Ubuntu 18.04.5 LTS
operating system.  We have compared the results obtained from
{\ourtool} with {\vajra}(v1.0)~\cite{tacas20},
{\viap}(v1.1)~\cite{viap} and {\veriabs}(v1.4.1-12)~\cite{veriabs}.
We choose {\vajra} which also employs inductive reasoning for proving
array programs and verify the benchmarks in its test-suite.  We
compared with {\veriabs} as it is the winner of the arrays
sub-category in SV-COMP 2020~\cite{svcomp20} and 2021~\cite{svcomp21}.
{\veriabs} applies a sequence of techniques from its portfolio to
verify array programs.  We compared with {\viap} which was the winner
in arrays sub-category in SV-COMP 2019~\cite{svcomp19}.  {\viap} also
employs a sequence of tactics, implemented for proving a variety of
array programs.  {\ourtool} does not use multiple techniques, however
we choose to compare it with these portfolio verifiers to show that it
performs well on a class of programs and can be a part of their
portfolio.  All tools take C programs in the SV-COMP format as input.
Timeout of 60 seconds was set for each tool.  A summary of the results
is presented in Table \ref{tab:exp-results}.

{\bf Benchmarks.}  We have evaluated {\ourtool} on a set of $303$
array benchmarks, comprising of the entire test-suite
of~\cite{tacas20}, enhanced with challenging benchmarks to test the
efficacy of our approach.  These benchmarks take a symbolic parameter
$N$ which specifies the size of each array.  Assertions are
(in-)equalities over array elements, scalars and (non-)linear
polynomial terms over $N$.  We have divided both the safe and unsafe
benchmarks in three categories.  Benchmarks in C1 category have
standard array operations such as min, max, init, copy, compare as
well as benchmarks that compute polynomials.  In these benchmarks,
branch conditions are not affected by the value of $N$, operations
such as modulo and nested loops are not present.  There are $110$ safe
and $99$ unsafe programs in the C1 category in our test-suite.  In C2
category, the branch conditions are affected by change in the program
parameter $N$ and operations such as modulo are used in these
benchmarks.  These benchmarks do not have nested loops in them.  There
are $24$ safe and unsafe benchmarks in the C2 category.  Benchmarks in
category C3 are programs with atleast one nested loop in them.  There
are $23$ safe and unsafe programs in category C3 in our test-suite.
The test-suite has a total of $157$ safe and $146$ unsafe programs.

\begin{figure}[!t]
  \begin{tabular}{cc}
    \includegraphics[scale=0.29]{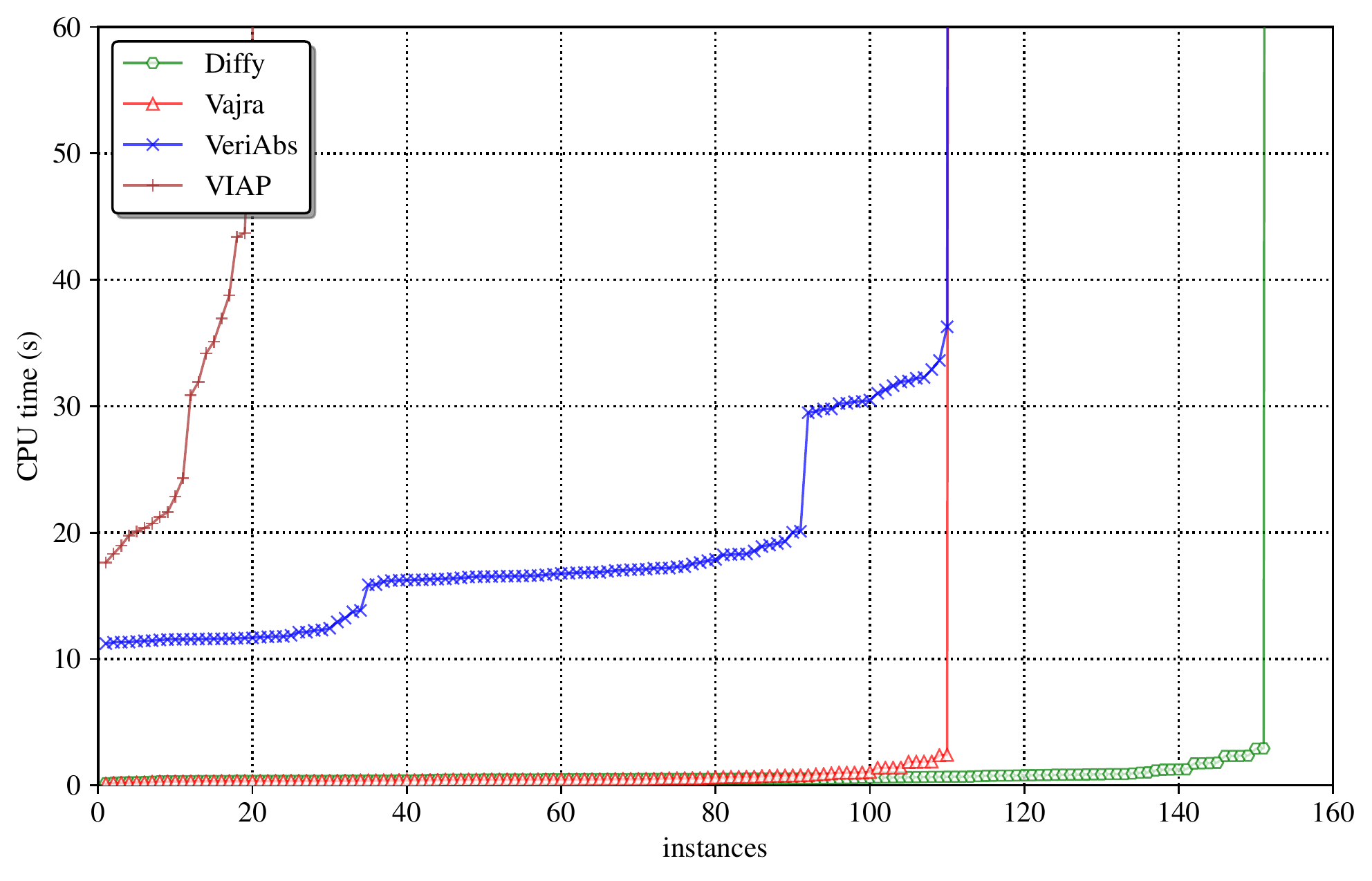}
&
    \includegraphics[scale=0.29]{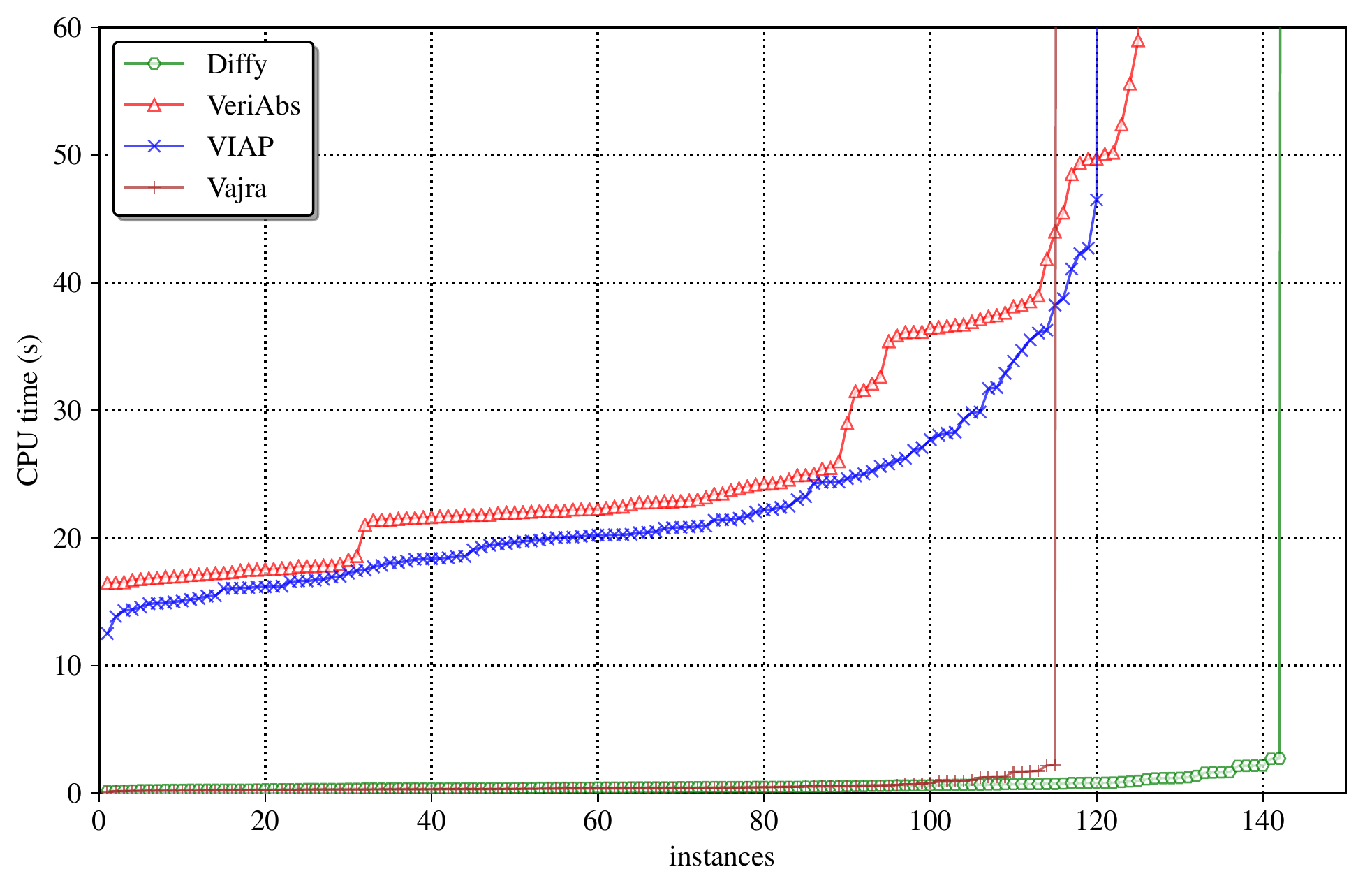}
    \\ (a) & (b)
  \end{tabular}
  \caption{Cactus Plots (a) All Safe Benchmarks (b) All Unsafe Benchmarks}
  \label{fig:cactus-all}
\end{figure}

{\bf Analysis.}  {\ourtool} verified $151$ safe benchmarks, compared
to $110$ verified by {\vajra} as well as {\veriabs} and $20$ verified
by {\viap}.  {\ourtool} was unable to verify $6$ safe benchmarks.  In
$3$ cases, the smt solver timed out while trying to prove the
induction step since the formulated query had a modulus operation and
in $3$ cases it was unable to compute the predicates needed to prove
the assertions.  {\vajra} was unable to verify $47$ programs from
categories C2 and C3.  These are programs with nested loops, branch
conditions affected by $N$, and cases where it could not compute the
difference program.  The sequence of techniques employed by
{\veriabs}, ran out of time on $47$ programs while trying to prove the
given assertion.  {\veriabs} proved $2$ benchmarks in category C2 and
$3$ benchmarks in category C3 where {\ourtool} was inconclusive or
timed out.  {\veriabs} spends considerable amount of time on different
techniques in its portfolio before it resorts to {\vajra} and hence it
could not verify $14$ programs that {\vajra} was able to prove
efficiently.  {\viap} was inconclusive on $24$ programs which had
nested loops or constructs that could not be handled by the tool.  It
ran out of time on $113$ benchmarks as the initial tactics in its
sequence took up the allotted time but could not verify the
benchmarks.  {\ourtool} was able to verify all programs that {\viap}
and {\vajra} were able to verify within the specified time limit.

\begin{figure}[!t]
  \begin{tabular}{cc}
    \includegraphics[scale=0.29]{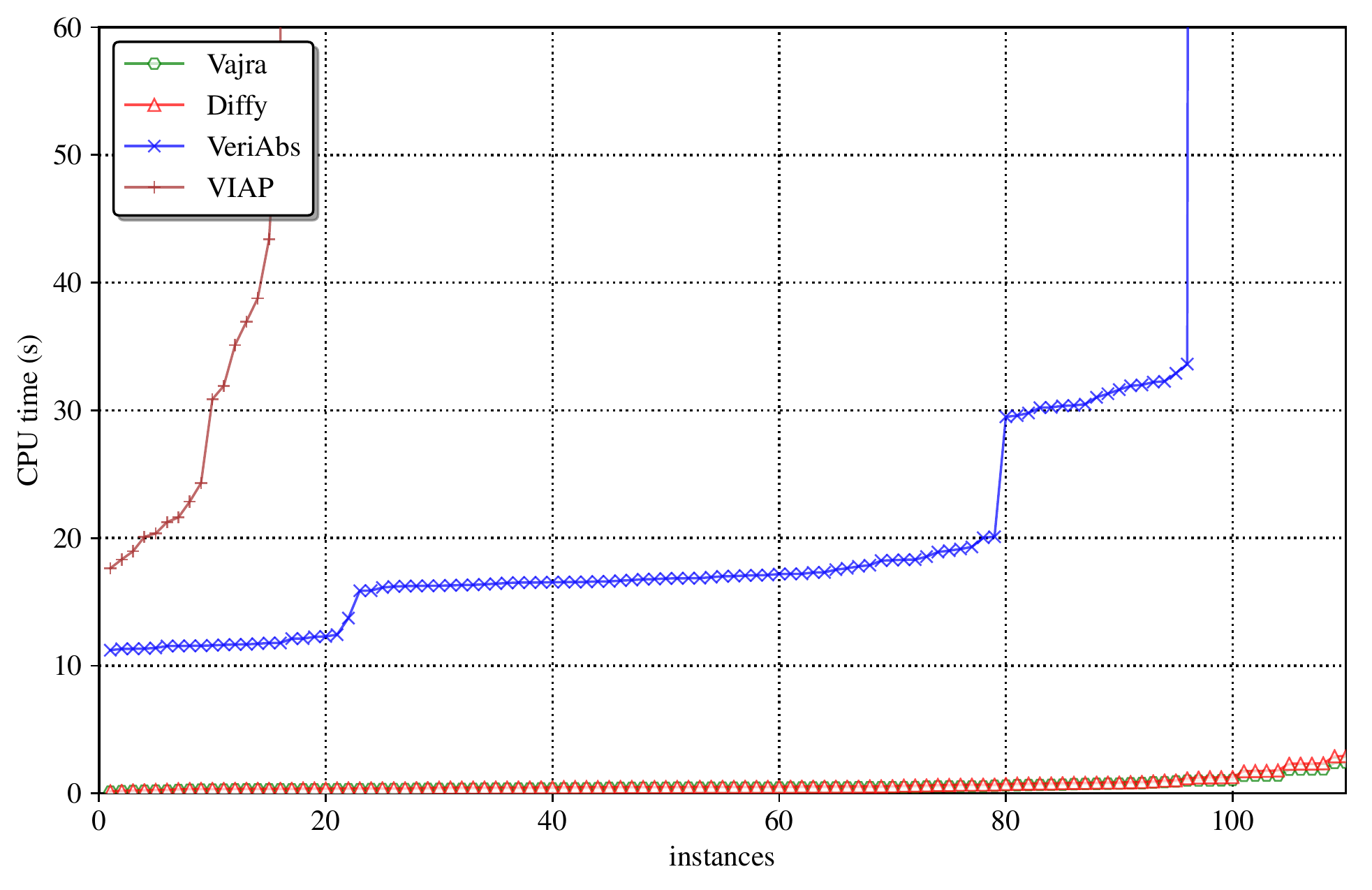}
    &
    \includegraphics[scale=0.29]{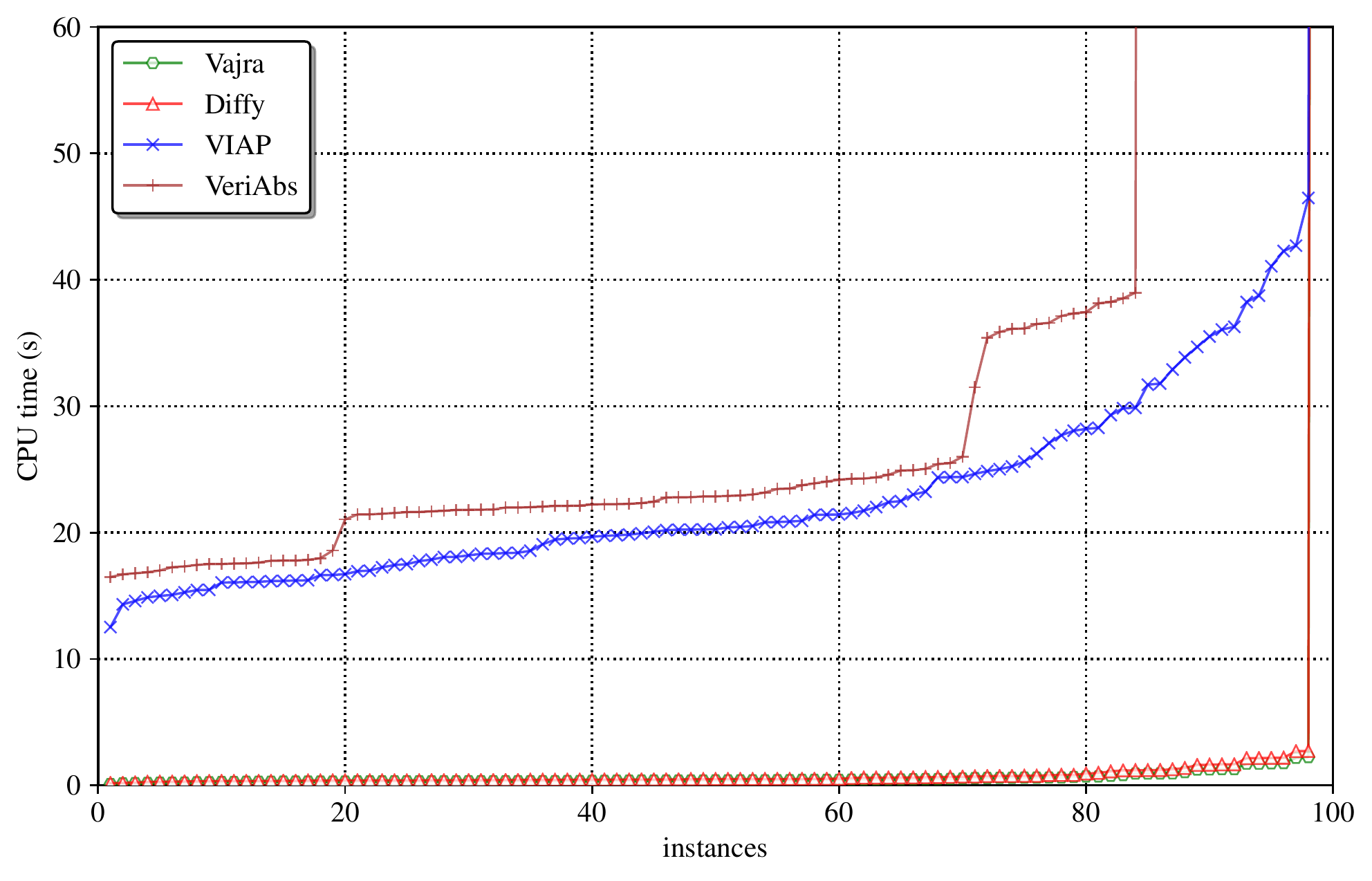}
    \\ (a) & (b)
  \end{tabular}
  \caption{Cactus Plots (a) Safe C1 Benchmarks (b) Unsafe C1 Benchmarks}
  \label{fig:cactus-C1}
\end{figure}

The cactus plot in Figure~\ref{fig:cactus-all}(a) shows the
performance of each tool on all safe benchmarks.  {\ourtool} was able
to prove most of the programs within three seconds.  The cactus plot
in Figure~\ref{fig:cactus-C1}(a) shows the performance of each tool on
safe benchmarks in C1 category.  {\vajra} and {\ourtool} perform
equally well in the C1 category.  This is due to the fact that both
tools perform efficient inductive reasoning.  {\ourtool} outperforms
{\veriabs} and {\viap} in this category.  The cactus plot in
Figure~\ref{fig:cactus-C2C3}(a) shows the performance of each tool on
safe benchmarks in the combined categories C2 and C3, that are
difficult for {\vajra} as most of these programs are not within its
scope.  {\ourtool} out performs all other tools in categories C2 and
C3.  {\veriabs} was an order of magnitude slower on programs it was
able to verify, as compared to {\ourtool}.  {\veriabs} spends
significant amount of time in trying techniques from its portfolio,
including {\vajra}, before one of them succeeds in verifying the
assertion or takes up the entire time allotted to it.  {\viap} took
$70$ seconds more on an average as compared to {\ourtool} to verify
the given benchmark.  {\viap} also spends a large portion of time in
trying different tactics implemented in the tool and solving the
recurrence relations in programs.

\begin{figure}[!b]
  \begin{tabular}{cc}
    \includegraphics[scale=0.29]{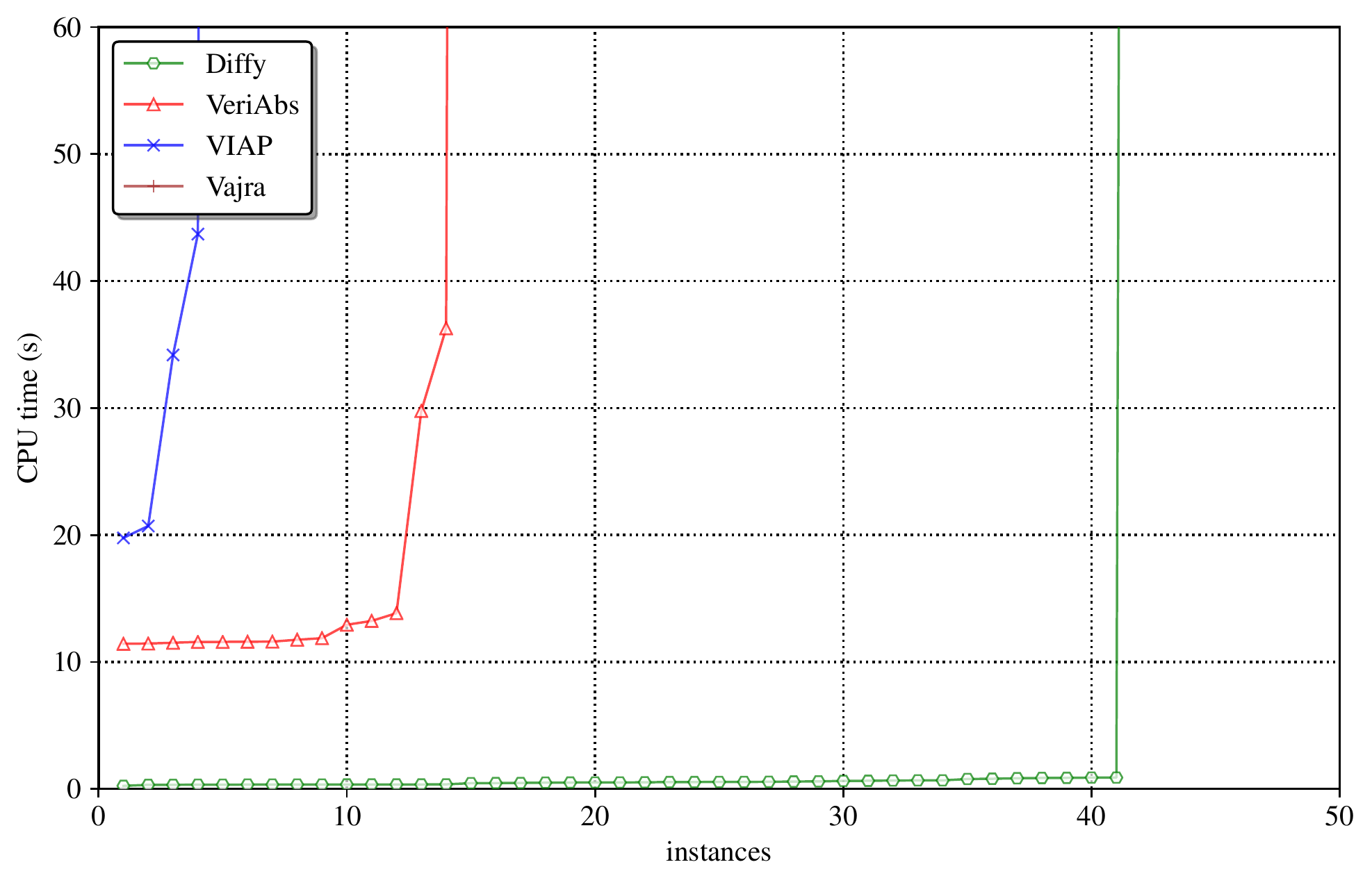}
    &
    \includegraphics[scale=0.29]{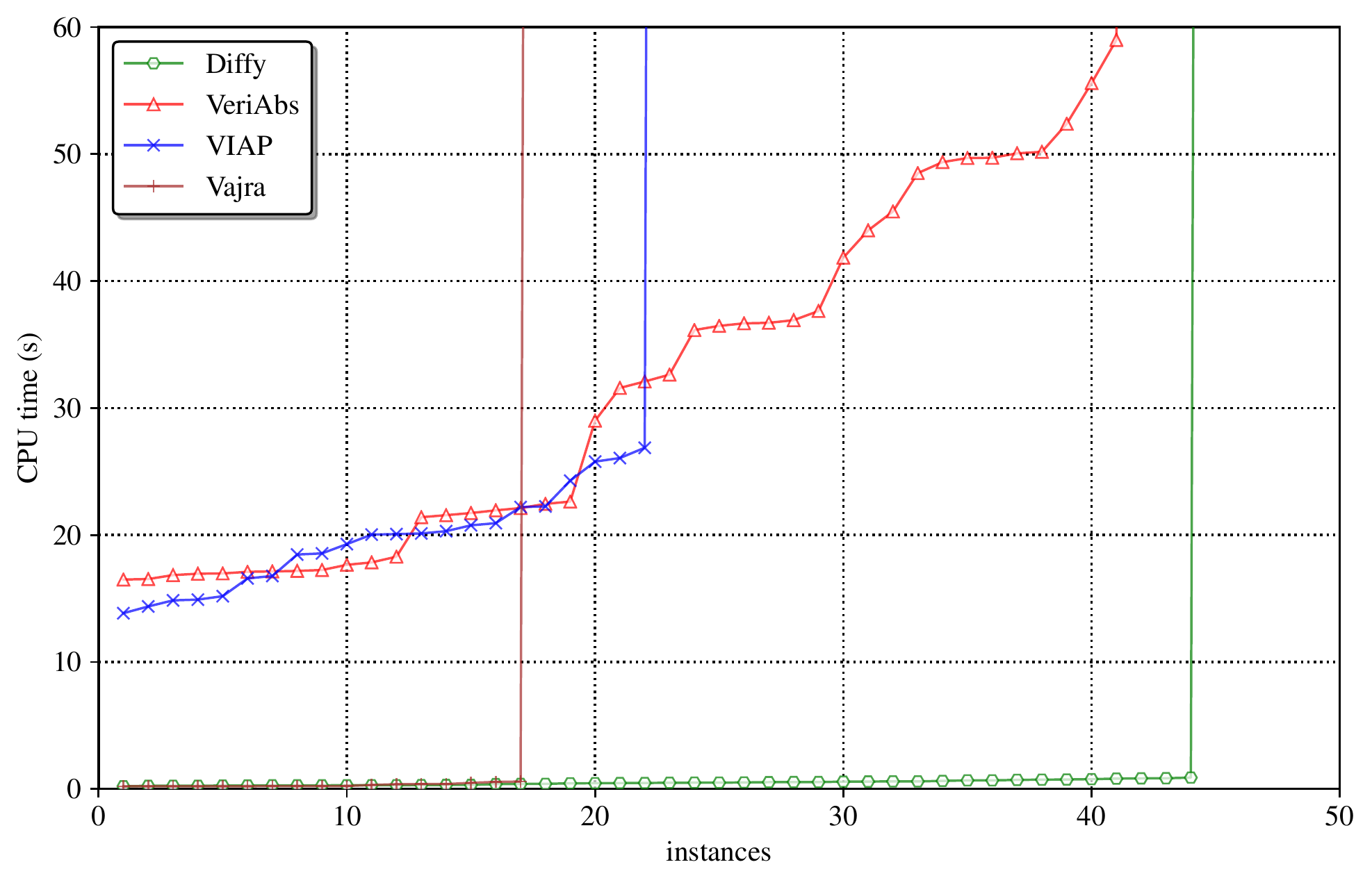}
    \\ (a) & (b)
  \end{tabular}
  \caption{Cactus Plots (a) Safe C2 \& C3 Benchmarks (b) Unsafe C2 \& C3 Benchmarks}
  \label{fig:cactus-C2C3}
\end{figure}

Our technique reports property violations when the base case of the
analysis fails for small fixed values of $N$.  While the focus of our
work is on proving assertions, we report results on unsafe versions of
the safe benchmarks from our test-suite.  {\ourtool} was able to
detect a property violation in $142$ unsafe programs and was
inconclusive on $4$ benchmarks.  {\vajra} detected violations in $115$
programs and was inconclusive on $31$ programs.  {\veriabs} reported
$125$ programs as unsafe and ran out of time on $21$ programs.
{\viap} reported property violation in $120$ programs, was
inconclusive on $23$ programs and timed out on $3$ programs.

The cactus plot in Figure~\ref{fig:cactus-all}(b) shows the
performance of each tool on all unsafe benchmarks.  {\ourtool} was
able to detect a violation faster than all other tools and on more
benchmarks from the test-suite.  Figure~\ref{fig:cactus-C1}(b) and
Figure~\ref{fig:cactus-C2C3}(b) give a finer glimpse of the
performance of these tools on the categories that we have defined.  In
the C1 category, {\ourtool} and {\vajra} have comparable performance
and {\ourtool} disproves the same number of benchmarks as {\vajra} and
{\viap}.  In C2 and C3 categories, we are able to detect property
violations in more benchmarks than other tools in less time.

To observe any changes in the performance of these, we also ran them
with an increased time out of $100$ seconds.  Performance remains
unchanged for {\ourtool}, {\vajra} and {\veriabs} on both safe and
unsafe benchmarks, and of {\viap} on unsafe benchmarks.  {\viap} was
able to additionally verify $89$ safe programs in categories C1 and C2
with the increased time limit.

\begin{figure}[!t]
  \begin{tabular}{cc}
    \includegraphics[scale=0.29]{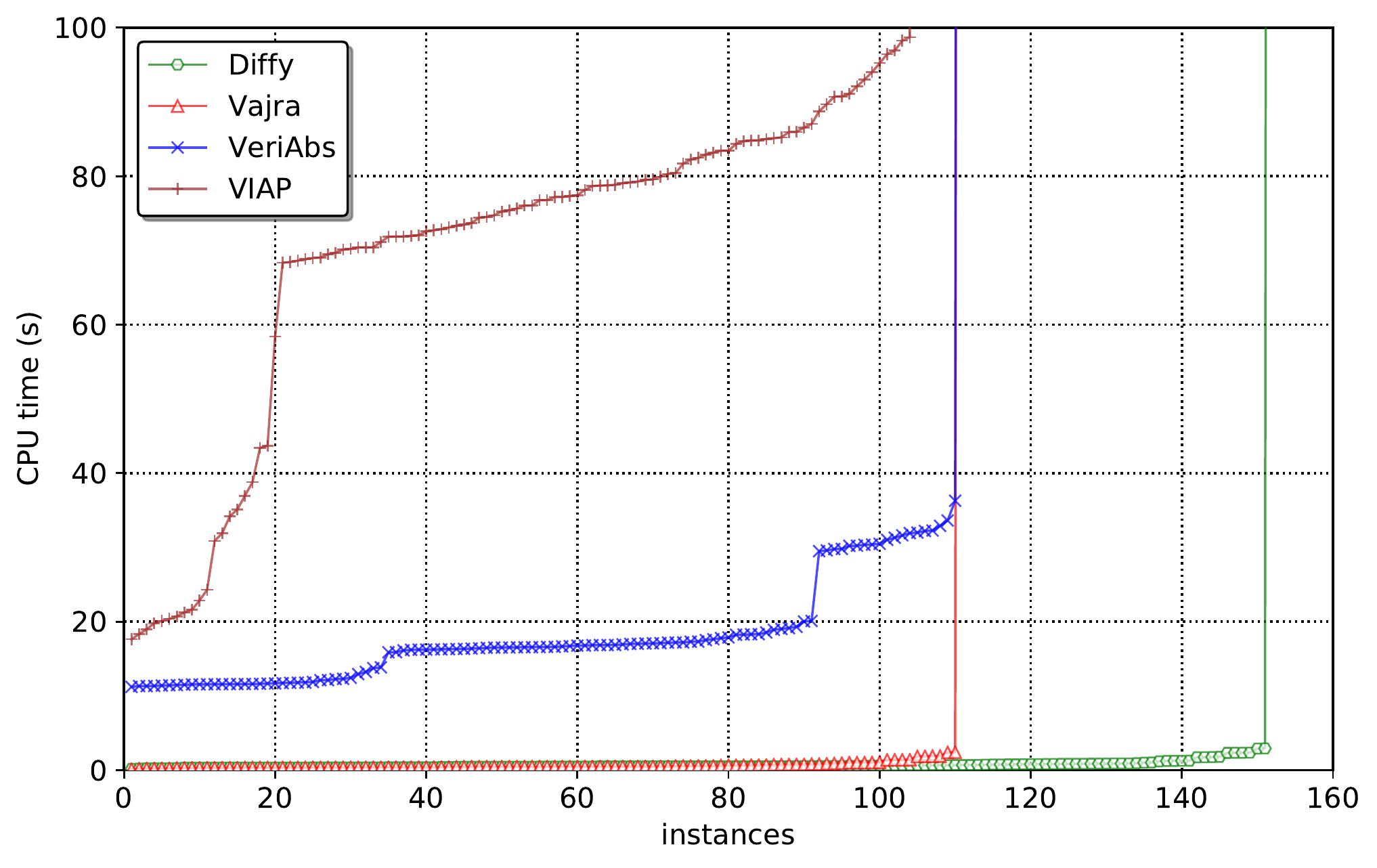}
&
    \includegraphics[scale=0.29]{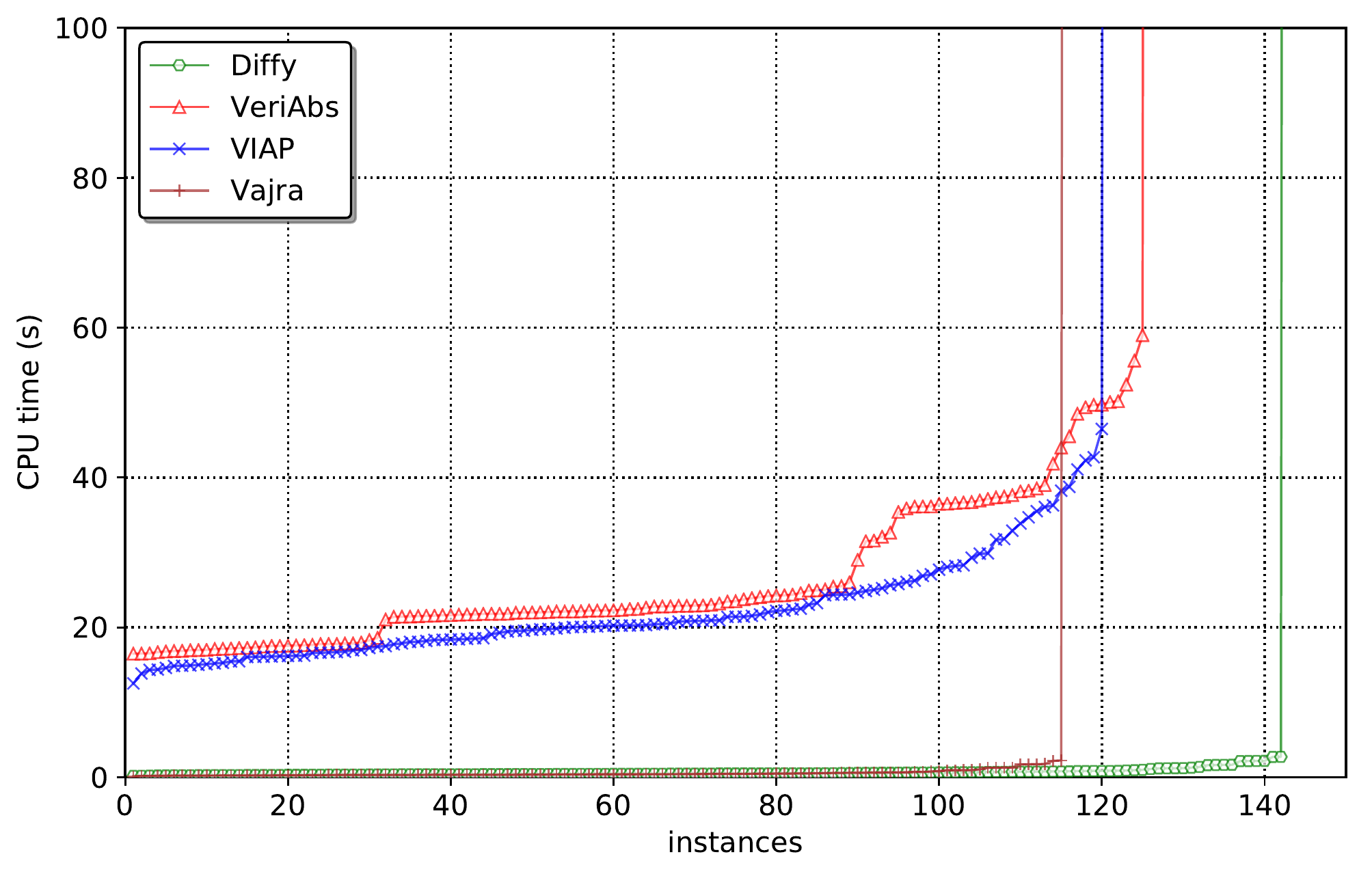}
    \\ (a) & (b)
  \end{tabular}
  \caption{Cactus Plots. TO=100s. (a) Safe Benchmarks (b) Unsafe Benchmarks}
  \label{fig:cactus-all-100}
\end{figure}



\section{Related Work}
\label{sec:related}
\textit{Techniques based on Induction.}  Our work is related to
several efforts that apply inductive reasoning to verify properties of
array programs.  Our work subsumes the full-program induction
technique in \cite{tacas20} that works by inducting on the entire
program via a program parameter $N$.  We propose a principled method
for computation and use of difference invariants, instead of computing
difference programs which is more challenging.  An approach to
construct safety proofs by automatically synthesizing squeezing
functions that shrink program traces is proposed in~\cite{squeezing}.
Such functions are not easy to synthesize, whereas difference
invariants are relatively easy to infer.  In \cite{sas17}, the
post-condition is inductively established by identifying a tiling
relation between the loop counter and array indices used in the
program.  Our technique can verify programs from \cite{sas17}, when
supplied with the \emph{tiling} relation.  \cite{brain} identifies
recurrent program fragments for induction using the loop counter.
They require restrictive data dependencies, called \emph{commutativity
  of statements}, to move peeled iterations across subsequent loops.
Unfortunately, these restrictions are not satisfied by a large class
of programs in practice, where our technique succeeds.

\textit{Difference Computation.}  Computing differences of program
expressions has been studied for incremental computation of expensive
expressions \cite{paige-differencing,liu-incrementalization},
optimizing programs with arrays \cite{liu-optimization}, and checking
data-structure invariants \cite{ditto07}.  These differences are not
always well suited for verifying properties, in contrast with the
difference invariants which enable inductive reasoning in our case.

\textit{Logic based reasoning.} In \cite{trace20}, trace logic that
implicitly captures inductive loop invariants is described.  They use
theorem provers to introduce and prove lemmas at arbitrary time points
in the program, whereas we infer and prove lemmas at key control
points during the inductive step using SMT solvers.  {\viap}
\cite{viap} translates the program to an quantified first-order logic
formula using the scheme proposed in \cite{viaptheory}.  It uses a
portfolio of tactics to simplify and prove the generated formulas.
Dedicated solvers for recurrences are used whereas our technique
adapts induction for handling recurrences.

\textit{Invariant Generation.}  Several techniques generate invariants
for array programs.  QUIC3~\cite{quic3}, FreqHorn~\cite{freqhorn} and
\cite{HornSolver} infer universally quantified invariants over arrays
for Constrained Horn Clauses (CHCs).  Template-based techniques
\cite{Gulwani,Srivastava09,Dirk07} search for inductive quantified
invariants by instantiating parameters of a fixed set of templates. We
generate relational invariants, which are often easier to infer
compared to inductive quantified invariants for each loop.

\textit{Abstraction-based Techniques.}  Counterexample-guided
abstraction refinement using prophecy variables for programs with
arrays is proposed in \cite{prophecytacas21}.
{\veriabs}~\cite{veriabs} uses a portfolio of techniques, specifically
to identify loops that can be soundly abstracted by a bounded number
of iterations.  {\vaphor}~\cite{vaphor} transforms array programs to
array-free Horn formulas to track bounded number of array cells.
{\booster}~\cite{booster} combines lazy abstraction based
interpolation\cite{lazyabsarray} and acceleration
\cite{acceleration1,acceleration2} for array programs.  Abstractions
in
\cite{ArrayCousotCL11,fluid,Gopan,Halbwachs,Jhala,Rival,Monniaux2015}
implicitly or explicitly partition the range array indices to infer
and prove facts on array segments.  In contrast, our method does not
rely on abstractions.



\section{Conclusion}
\label{sec:conc}
We presented a novel verification technique that combines generation
of difference invariants and inductive reasoning.  These invariants
relate corresponding variables and arrays from two versions of a
program and are easy to infer and prove.  These invariants facilitate
inductive reasoning by assisting in the inductive step.  We have
instantiated these techniques in our prototype {\ourtool}.
Experiments shows that {\ourtool} out-performs the tools that won the
Arrays sub-category in SV-COMP 2019, 2020 and 2021.  Investigations in
using synthesis techniques for automatic generation of difference
invariants to verify properties of array manipulating programs is a
part of future work.



\bibliographystyle{splncs04}
\bibliography{references}

\appendix
\section{Programs with $\exists$ in the Post-Condition}
In this section, we present examples with existentially quantified
formulas as post-condtions that can be verified using our technique.

\begin{figure}[h]
 \begin{tabular}{l|l}
  \begin{minipage}{0.49\textwidth}
  {\scriptsize
\begin{alltt}
// assume(true)
1. void Max(int A[], int N) \{
2.   int Max=A[0];
3.   for(i=0; i<N; i++) \{
4.     if(Max < A[i])
5.       Max = A[i];
6.   \}
7. \}
// assert(\(\exists\)i \(\in\) [0,N), Max = A[i])
\end{alltt}
  }
  \begin{center}(a)\end{center}
  \end{minipage}
  &
  \begin{minipage}{0.49\textwidth}
  {\scriptsize
\begin{alltt}
// assume(\(\forall\)i \(\in\) [0,N), A[i] > 0)
1. void Average(int A[], int N) {
2.   float B[N], sum = 0;
3.   for (int i=0; i<N; i=i+1)
4.     sum = sum + A[i];
5.   for (int j=0; j<N; j=j+1)
6.     B[j] = A[j]/sum;
7. }
// assert( sum > 0 and
//   (\(\exists\)i \(\in\) [0,N), B[i] >= 1/N) and
//   (\(\exists\)j \(\in\) [0,N), B[j] <= 1/N) )
\end{alltt}
  }
  \begin{center}(b)\end{center}
  \end{minipage}

  \\
  &
  \\
  
  \begin{minipage}{0.49\textwidth}
  {\scriptsize
\begin{alltt}
// assume(N>1)
1. void ExtGZN(int A[], int N) \{
2.   for(i=0; i<N; i++) \{
3.     A[i] =  i;
4.   \}
5. \}
// assert(\(\exists\)i \(\in\) [0,N), A[i] > 0)
// assert(\(\exists\)j \(\in\) [0,N), A[j] >= N-1)
\end{alltt}
  }
  \begin{center}(c)\end{center}
  \end{minipage}
  &
  \begin{minipage}{0.49\textwidth}
  {\scriptsize
\begin{alltt}
// assume(N>1)
1. void EvenOdd(int A[], int N) \{
2.   for(i = 0; i<N; i++) \{
3.     if(i\%2 == 0)  A[i] = 0;
4.     else  A[i] = 1;
5.   \}
6. \}
// assert(\(\exists\)i \(\in\) [0,N), A[i] = 1)
// assert(\(\exists\)j \(\in\) [0,N), A[j] = 0)
\end{alltt}
  }
  \begin{center}(d)\end{center}
  \end{minipage}

  \\
  &
  \\
  
  \begin{minipage}{0.49\textwidth}
  {\scriptsize
\begin{alltt}
// assume(\(\forall\)i \(\in\) [0,N), A[i] = 1)
1. void Sum(int A[], int N) \{
2.   int B[N], sum = 0;
3.   for (int i=0; i<N; i=i+1)
4.     sum = sum + A[i];
5.   for(int j=0; j<N; j++)
6.     B[j] = sum;
7. \}
// assert(\(\exists\)i \(\in\) [0,N), B[i] = N)
\end{alltt}
  }
  \begin{center}(e)\end{center}
  \end{minipage}
  &
  \begin{minipage}{0.49\textwidth}
  {\scriptsize
\begin{alltt}
// assume(\(\forall\)i \(\in\) [0,N), A[i] = N)
1. void Sum2(int A[], int N) \{
2.   int B[N], sum = 0;
3.   for (int i=0; i<N; i=i+1)
4.     sum = sum + A[i];
5.   for(int j=0; j<N; j++)
6.     B[j] =  sum + j;
7. \}
// assert(\(\exists\)i \(\in\) [0,N), B[i] = i + N*N)
\end{alltt}
  }
  \begin{center}(f)\end{center}
  \end{minipage}
 \end{tabular}
\caption{Examples with Existentially Quantified Post-conditions}
\label{fig:extquant}
\end{figure}

\end{document}